\documentclass[11pt]{article}

\usepackage{sn-preamble} 
\usepackage{tikz}
\usepackage{verbatim}
\usepackage{cancel}
\usepackage{enumitem}
\usepackage{bm}
\usepackage{caption}
\usepackage{changepage}   
\usepackage{graphicx,subcaption}

\newcommand{\ignore}[1]{}

\newcommand{\dpos}{\calD_{+}}
\newcommand{\dneg}{\calD_{-}}
\newcommand{\good}{\text{good}}
\newcommand{\bad}{\text{bad}}
\newcommand{\hgood}{\calH_+^{\good}}

\newcommand{\dheavy}{\calD_-^{(1)}}

\newcommand{\sgn}{{\mathrm{sign}}}

\newcommand{\nmarg}{\eta}
\newcommand{\findh}{\textsc{FindGoodHalfspace}}

\newcommand{\cover}{\textsc{CoverLearner}}

\title{Tight Bounds for Learning Polyhedra with a Margin}
\author{Shyamal Patel\thanks{Email: \texttt{shyamalpatelb@gmail.com}} \\ \textsl{Columbia University} \and Santosh S. Vempala\thanks{Email: \texttt{vempala@gatech.edu}}\\ \textsl{Georgia Tech}}
\date{}

\begin{document}

\maketitle

\begin{abstract}
We give an algorithm for PAC learning intersections of $k$ halfspaces with a $\rho$ margin to within error $\eps$ that runs in time $\poly(k, \eps^{-1}, \rho^{-1}) \cdot \mbox{exp}(O(\sqrt{n \log(1/\rho) \log k}))$. Notably, this improves on prior work which had an exponential dependence on either $k$ or $\rho^{-1}$ and matches known cryptographic and Statistical Query lower bounds up to the logarithmic factors in $k$ and $\rho$ in the exponent. Our learning algorithm extends to the more general setting when we are only promised that most points have distance at least $\rho$ from the boundary of the polyhedron, making it applicable to continuous distributions as well. 
\end{abstract}

\section{Introduction}
Over the past half-century of computational learning theory, the problem of learning a single halfspace has emerged as its  archetypal ``well-solvable" problem and its study has led to the development of new and general techniques for its efficient solution in a variety of settings (supervised, unsupervised, semi-supervised, statistical query, agnostic,...). That said, provable guarantees for natural extensions of halfspaces have been a long-standing challenge. One such generalization of great interest is the problem of learning an intersection of finitely many halfspaces: the labeling function $f:\R^n \rightarrow \{-1, 1\}$ is defined by an unknown set of $k$ halfspaces: for $x \in \R^n$, 

\[
f(x) = \begin{cases}
1 \quad \mbox{ if } w_i \cdot x > \theta_i \mbox{ for all } i\in [k]\\
-1 \quad \mbox{otherwise.}
\end{cases}
\]

When $k = \poly(n)$, this problem captures many well-known learning problems of independent interest, such as learning Disjunctive Normal Forms (DNFs) and learning polyhedra. Even for $k=2$ halfspaces, only recently \cite{alman2026learning} gave a subexponential algorithm for PAC learning, with a doubly exponential dependence on $k$. One typically imposes additional restrictions to make the problem tractable. Most relevant to this work is the assumption of a $\rho$ margin, which roughly corresponds to each point having distance at least $\rho$ from any of the $k$ halfspaces defining the target function (cf. \Cref{def:margin} and \Cref{def:robust}). In this setting, \cite{klivans2008margin} give an algorithm whose complexity grows as either $(k/\rho)^{\wt{O}(k\log(1/\rho))}$ or $(\log k/\rho)^{\wt{O}(\sqrt{1/\rho}\log k)}$, improving on earlier work of~\cite{Arriaga2006algorithmic} and drawing on techniques from \cite{klivans2004intersections}. For small $\rho$, this was improved by Gottfried at al.~\cite{gottlieb2021learning}\footnote{As we discuss in \Cref{sec:margin}, this is a different but related notion of margin compared to \cite{klivans2008margin} and \cite{Arriaga2006algorithmic}.} who showed that there is a quasi-proper learning algorithm that uses $m = \wt{O}(\rho^{-1} \cdot k)$ examples and runs in time $m^{\wt{O}(1 / \rho^2)}$. Another improvement was given by \cite{goel2019learning}, who proved that an intersection of $k$ halfspaces can be learned in time $\poly(n, k^{O(1/\rho)})$. For learning polyhedra with $k = \poly(n)$ facets and $\rho = 1/\poly(n)$ margin, these bounds are all exponential in the dimension.

The complexity of learning an intersection of halfspaces for {\em structured} input distributions has also been a subject of intensive study. Following early work by Baum~\cite{Baum90, Baum90b}, Blum and Kannan~\cite{blum1997} gave an algorithm for learning an intersection of a fixed number of halfspaces from the uniform distribution on the unit ball. This was later improved in ~\cite{Vem10a} to complexity $n^{O(k)}$ for logconcave distributions (with halfspaces going through the origin). 
When the input distribution is Gaussian, Klivans, O'Donnell and Servedio~\cite{KOS08} gave algorithms with complexity  $2^{O(\sqrt{n}/\eps^4)}$ (independent of $k$) and $n^{\wt{O}(\log k/\varepsilon^4)}$ for learning to within error $\eps$. The latter bound was improved in~\cite{Vem10b} to complexity $\poly(n,k) \cdot \min \{k^{\wt{O}(\log k/\varepsilon^4)}, (k/\varepsilon)^{O(k)}\}$, which allows for learning an intersection of up to $k=2^{O(\sqrt{\log n})}$ halfspaces in polytime. Additionally, under the uniform distribution over the Boolean cube, \cite{kane2014average} showed an algorithm with running time $n^{O(\log(k)/\eps^2)}$.
Notably, for structured distributions beyond uniform and Gaussian, when the number of halfspaces is polynomial in the dimension (as is the case for learning polyhedra), the best known time complexity is again exponential in the dimension.

\paragraph{Lower bounds.} 
It is natural to wonder why techniques that work so well for halfspaces, e.g., Perceptron-style algorithms, linear programming, outlier removal, isotropic transformation, polynomial approximation, and $\ell_1$-minimization do not extend to the intersection of halfspaces. First, proper learning is provably hard unless P=NP, even for $k=2$ halfspaces~\cite{BR92, alekhnovich2004learnability, khot2008hardness}. Next, when $k$ is part of the input, the same paper shows that learning an intersection of $k$ halfspaces as an intersection of at most $kn^{1-\eps}$ halfspaces is also hard via a reduction from hypergraph coloring~\cite{BR92}. \cite{gottlieb2021learning} proved that under the Exponential Time Hypothesis (ETH), proper learning of an intersections of $k$ halfspaces with a margin requires $\exp((k/16\gamma^2)^{1-\delta})$ time. Of course, these results do not rule out efficient learning, as the hypothesis class used could be any function class that can be learned in polytime. One such natural class is bounded-degree polynomials, which are used for learning various hypotheses such as De Morgan Formulas and Disjunctive Normal Forms (DNFs) \cite{klivans2004dnf, lee2009note}. Unfortunately, approximating an intersection of two halfspaces with a polynomial threshold function needs a polynomial of degree $\Omega(n)$~\cite{sherstov2013optimal}; since learning polynomial-threshold functions (PTFs) requires time exponential in the degree of the polynomial, the standard approach to learning intersections of halfspaces as a PTF requires $2^{\Omega(n)}$ time. 

From a broader perspective, state-of-the-art algorithms for learning halfspaces and low-degree polynomials can be viewed as {\em Statistical Query} algorithms, a model of learning introduced by Kearns~\cite{kearns1998sq} and extended and applied widely in subsequent decades~\cite{blum1994weakly,feldman2017statistical}. For many well-known problems in supervised and unsupervised learning, the best-known upper bounds are achievable by SQ algorithms and have nearly matching lower bounds in the SQ model of computation. For intersections of halfspaces,  Klivans and Sherstov~\cite{klivans2007unconditional} showed that learning an intersection of $\sqrt{n}$ halfspaces with $1/\poly(n)$ margin over the Boolean cube has SQ complexity $2^{\widetilde{\Omega}(\sqrt{n})}$. For small $k$, this was later improved by Tiegel \cite{tiegel2024improved}, and a minor modification of Tiegel's approach also yields a simple proof attaining the $2^{\Omega(\sqrt{n})}$ SQ lower bound for  intersections of $\sqrt{n}$ halfspaces (see Appendix~\ref{app:SQ}).

Another line of work has developed representation-independent lower bounds for this problem based on cryptographic assumptions. Klivans and Sherstov~\cite{klivans2009cryptographic} showed hardness of PAC-learning an intersection of $k=n^{\eps}$ halfspaces  for some $\eps > 0$ assuming that the unique Shortest Vector Problem in lattices is hard to approximate within a factor of $\wt{O}(n^{1.5})$.  This was improved by Tiegel~\cite{tiegel2024improved}, who showed a lower bound of $n^{\Omega(k/\log k)}$ for $k \le \sqrt{n}$ halfspaces under similar lattice problem hardness assumptions. 

\paragraph{Our work.} In this paper, we give a simple new algorithm for PAC-learning an intersection of halfspaces. The algorithm has complexity $2^{\wt{O}(\sqrt{n})}$ for any target function with $k\le \poly(n)$ halfspaces and  margin $1/\poly(n)$ (see Theorem~\ref{thm:main-hard-margin} for the precise complexity). It is based on a conceptually very simple weak learning algorithm: we output a random halfspace from the convex set of halfspaces that correctly classify a sufficiently large sample of positive examples and negative examples (notably of different sizes). 

We then refine the usual notion of margin to a weaker notion of ``soft'' margin $\bar{\rho}$, which only requires that the part of the negative distribution within distance $\bar{\rho}$ of the positive region (the target polyhedron) is at most a $1/\poly(n)$ fraction (see \Cref{def:soft-margin} for the precise definition). This allows us to extend our results to continuous distributions including ``hard" input distributions such as mixtures of Gaussians (``parallel pancakes") used in lower bounds for the problem, specifically in \cite{tiegel2024improved}, as these distributions do not have hard margins but do have $1/\poly(n)$ soft margins. The complexity of our algorithm for a $\rho$-soft margins is $2^{O(\sqrt{n \log (k\eps^{-1}) \log (1/\bar{\rho})})}$ to learn the distribution outside the margin to accuracy $1-\eps$ (see \Cref{thm:main-full}). Thus, our algorithm provides a nearly tight upper bound matching these lower bounds for learning intersections of $\poly(n)$ halfspaces.

\subsection{Results}
Our first result is the following theorem for learning an intersection of halfspaces with a margin.  

\begin{theorem}
	\label{thm:main-hard-margin} 
	Let $\calD$ be a distribution, $f$ an intersection of $k$ halfspaces, and $\eps,\delta, \rho \in (0,1/3)$. If $\calD$ has a $\rho$ margin with respect to $f$ (cf. \Cref{def:margin}), then there exists an algorithm that $(\eps,\delta)$-PAC learns $f$  in time
	 \[ 2^{O\left(\sqrt{n \log(1/\rho) \log k}\right)} \cdot \poly\left(k, \eps^{-1}, \rho^{-1}, \log(\delta^{-1})\right)\]
\end{theorem}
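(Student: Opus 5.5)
The proof goes through a weak learner followed by boosting. The weak learner outputs a random halfspace from the convex set of halfspaces that correctly classify a sample of positives and a (smaller) sample of negatives. Since $f$ is an intersection of $k$ halfspaces with margin $\rho$, we may normalize so that all points lie in the unit ball and each defining halfspace $(w_i,\theta_i)$ has $\|w_i\|=1$.

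\textbf{Step 1: reduction to one-sided weak learning.} Every positive point satisfies all $k$ constraints. Every negative point violates at least one, and by pigeonhole some index $i^\ast$ is violated, with margin $\rho$, by at least a $1/k$ fraction of the negative mass. It therefore suffices to output a halfspace $h$ with the following two properties:
\begin{itemize}
\item $h$ labels almost all of $\dpos$ correctly, with error at most $\eps'$.
\item $h$ rejects a $2^{-O(\sqrt{n\log(1/\rho)\log k})}$ fraction of $\dneg$.
\end{itemize}
Given such an $h$, a standard boosting or covering argument finishes the proof. We repeatedly add halfspaces and intersect them, reweighting only the remaining negatives. Each round removes an $\alpha$ fraction of the surviving negative mass while sacrificing only $\eps'$ of the positives. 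After $O(\alpha^{-1}\log(1/\eps))$ rounds we reach error $\eps$ by a union bound, with $\eps'$ chosen polynomially small. This accounts for the factor $\poly(k,\eps^{-1})\cdot 1/\alpha$.

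\textbf{Step 2: the weak learner.} We draw $m_+$ positive examples and $m_-$ negative examples. Let $K\subseteq \R^{n+1}$ be the convex set of $(w,\theta)$ with $\|w\|\le 1$ that classify all of them correctly, i.e.\ accept the positives and reject the negatives. We then output a uniformly random point of $K$, which can be sampled in polynomial time by the standard convex-body sampling algorithms.

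Two things must be checked. First, $K$ is large. Choose the negative sample so that, with probability at least $k^{-m_-}$, all $m_-$ negatives violate the same index $i^\ast$. Then every $(w,\theta)$ within distance about $\rho/2$ of $(w_{i^\ast},\theta_{i^\ast})$ lies in $K$, so $K$ contains a ball of radius $\Omega(\rho)$. Second, a random element of $K$ has small error on $\dpos$. This follows from a VC/sample-compression argument over the positive sample, since the hypothesis is consistent with all $m_+$ positives. Taking $m_+=\poly(n,1/\eps')$, or using a symmetrization argument for random consistent hypotheses, gives error $\eps'$ on the positives.

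\textbf{Step 3 (main obstacle): lower-bounding the rejected negative mass.} We must show that a random $h\in K$ rejects a fresh negative with probability at least $2^{-O(\sqrt{n\log(1/\rho)\log k})}$. Here is the plan. For a fresh negative $x$, the set $K_x=\{h\in K: h \text{ rejects } x\}$ is a halfspace-slice of $K$. Conditioned on success, the sample $x_1,\dots,x_{m_-}$ together with $x$ forms $m_-+1$ exchangeable draws. The probability that the fresh point is rejected equals the expected volume ratio $\mathrm{vol}(K_{m_-+1})/\mathrm{vol}(K_{m_-})$, where $K_j$ denotes the version space after $j$ negatives.

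Telescoping and exchangeability give
\[
\prod_{j} \frac{\mathrm{vol}(K_{j+1})}{\mathrm{vol}(K_j)} \ge \frac{\rho^{O(n)}}{1}.
\]
The left side is at least this because the final version space still contains the $\Omega(\rho)$ ball, while the initial one lies in a unit ball. By AM-GM, the average one-step ratio is therefore at least $\rho^{O(n/m_-)}$. This bound is only useful on the event, of probability $k^{-m_-}$, that the negatives share an index. Balancing the two costs $\rho^{-n/m_-}$ and $k^{m_-}$ gives
\[
m_-=\sqrt{n\log(1/\rho)/\log k},
\]
which yields advantage $\alpha = 2^{-O(\sqrt{n\log(1/\rho)\log k})}$.

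The delicate point is making the conditioning rigorous. Whether the fresh negative also violates $i^\ast$ couples the event to the volume ratios. I would handle this by averaging over which index is chosen, losing only a factor of $k$, and by applying Jensen's inequality to $\log$ of the volume ratios.
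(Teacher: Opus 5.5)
Your Step 3 is a genuinely different argument from the paper's and can be made to work. Step 1's accounting, however, is wrong as written. You run $T=O(\alpha^{-1}\log(1/\eps))$ covering rounds and lose up to $\eps'$ of $\dpos$ in each. The union bound therefore needs $\eps'\le \eps\alpha/\log(1/\eps)$, which is $2^{-\Theta(\sqrt{n\log(1/\rho)\log k})}$ up to polynomial factors. With $\eps'$ only polynomially small, the total positive error is $T\eps'\gg 1$. The repair is to take $m_+=\tilde O(n/\eps')=2^{O(\sqrt{n\log(1/\rho)\log k})}\cdot\poly(n,k,\eps^{-1})$. There is no circularity, because your lower bound on $\alpha$ never uses $m_+$: it only uses $K_0\subseteq$ unit ball and the ball inside $K_{m_-}$. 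This is exactly the balance the paper builds in. There $M_+$ is of order $2^{\sqrt{n\log(9/\rho)\log(2k/\eps)}}$ and the threshold defining $\hgood$ is $\frac{100n}{\eps^2}\cdot\frac{\log M_+}{M_+}$. So the $O(n\log M_+/M_+)$ positive error from the hitting-set lemma is an $\eps$-fraction of the rejected negative mass.

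Two smaller repairs are also needed.

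\textbf{Ball center.} \Cref{def:margin} gives slack only on the negative side, and positives may lie arbitrarily close to the $i^\ast$-th hyperplane. The $\Omega(\rho)$ ball must therefore be centered at the translate $(w_{i^\ast},\theta_{i^\ast}-\rho/2)$ in your convention, not at $(w_{i^\ast},\theta_{i^\ast})$. This is what the cylinder $(w_1+\frac{\rho}{3}B_2^n)\times[\frac{\rho}{3},\frac{2\rho}{3}]$ in \Cref{lem:vol-lb} does.

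\textbf{Which step AM--GM controls.} With $r_j=\mathrm{vol}(K_{j+1})/\mathrm{vol}(K_j)$, AM--GM bounds only the average over $j$ of $\E[r_j]$. Exchangeability does not make the $r_j$ identically distributed, so it says nothing about the last step. Have the algorithm stop after a uniformly random number $\mathbf{j}<m_-$ of negatives. On the success event, $r_j$ depends only on $x_1,\dots,x_{j+1}$, which are i.i.d.\ from $\dheavy$. Hence $\E[r_j]$ is exactly the expected rejection rate on $\dheavy$ of a uniform point of $K_j$, which disposes of your coupling worry. You then lose a factor $k$ passing to $\dneg$. Since rejection rates lie in $[0,1]$, an expected rate $a$ gives a single $h$ with rate at least $a/2$ with probability at least $a/2$; verify candidates on fresh samples.

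With these fixes your proof is a valid alternative route. The paper splits $\calH_+$ into good and bad halfspaces. It uses Fubini (\Cref{lem:vol-decay}) to show that each $\dheavy$ sample shrinks the bad volume by roughly $2^{-\sqrt{n\log(9/\rho)\log(2k/\eps)}}$ in expectation, then applies Markov at each of the $M_-$ steps. Comparing with the $(\rho/6)^{n+1}$ lower bound on $|\calH|$ gives a stronger conclusion: a uniform point of the full version space is good with constant probability on the success event, and goodness directly encodes property (ii).

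Your argument is a potential-function bound. The quantity $\log\mathrm{vol}$ can drop by only $O(n\log(1/\rho))$ over $m_-$ cuts, and the expected one-step retention equals the expected rejection rate of a random consistent hypothesis. This needs no good/bad threshold and no per-step Markov bounds, and it decouples the rejection guarantee from $m_+$. The price is that you only get an in-expectation guarantee at a random step, costing an extra factor $\alpha$ in success probability, which is harmless here. Using covering instead of \Cref{thm:boosting} is also fine. It outputs an intersection of halfspaces, as in the paper's proof of \Cref{thm:main-full}, and since your $m_-$ and $\alpha$ do not involve $\eps$, the exponent stays $\sqrt{n\log(1/\rho)\log k}$.
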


This improves on previous results that have an exponential dependence on either $k$ or $\rho^{-1}$, and so are exponential-time algorithms for learning polytopes with $\poly(n)$ facets and $1/\poly(n)$ margin \cite{klivans2004learning}. 

As an immediate consequence, we get the following.

\begin{corollary}
In the distribution-free PAC learning model, there is an algorithm to learn the class of intersections of $k$ halfspaces with integer weights of magnitude at most $W$ over the Boolean cube $\bits^n$ to error $\eps$ with probability $1 - \delta$ in time 
\[2^{O\left(\sqrt{n \log(nW) \log k}\right)} \cdot \poly\left(k, \eps^{-1}, nW, \log(\delta^{-1}) \right) \]
\end{corollary}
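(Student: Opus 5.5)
The plan is to reduce the corollary to \Cref{thm:main-hard-margin} by showing that any intersection of $k$ halfspaces with integer weights of magnitude at most $W$, restricted to $\bits^n$, can be rewritten as an intersection of $k$ halfspaces under which every point of the cube has margin $\rho \ge 1/\poly(nW)$. Substituting $\log(1/\rho) = O(\log(nW))$ and $\rho^{-1} = \poly(nW)$ into the theorem then gives exactly the stated bound $2^{O(\sqrt{n\log(nW)\log k})}\cdot \poly(k,\eps^{-1},nW,\log\delta^{-1})$. Since the bound in \Cref{thm:main-hard-margin} holds for an arbitrary distribution, the result is distribution-free over the cube.

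\textbf{Building the margin.} Fix one halfspace $w\cdot x > \theta$ with $w\in\mathbb{Z}^n$ and $|w_j|\le W$. We may take $\theta$ to be an integer, since rounding it does not change the labels of integer points. For $x\in\bits^n$ the value $w\cdot x$ is an integer when $x\in\{0,1\}^n$, and an integer of fixed parity when $x\in\{-1,1\}^n$. So we shift the threshold to $\theta' = \theta + 1/2$, or to $\theta+1$ in the $\pm1$ case. This leaves $f$ unchanged on the cube and gives $|w\cdot x - \theta'| \ge 1/2$ for every cube point. Dividing by $\|w\|_2 \le \sqrt{n}W$ turns this into a Euclidean distance of at least $1/(2\sqrt{n}W)$ from every defining hyperplane.

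\textbf{Matching \Cref{def:margin}.} If the definition also requires points to lie in the unit ball, we first rescale $x\mapsto x/\sqrt{n}$, which makes $\|x\|\le 1$. The margin then becomes $\rho = 1/(2nW)$, which is still $1/\poly(nW)$. Because the margin holds at every point of $\bits^n$, it holds for any distribution supported on the cube. The bit complexity of these transformations is polynomial, so no extra cost arises.

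The only step needing care is this last one: checking that the normalization conventions in \Cref{def:margin} (unit-norm $w_i$, unit-ball support, or margin measured relative to the $\ell_2$ distance to each hyperplane) are all met after rescaling, while losing only polynomial factors in $nW$. I expect this to be routine.
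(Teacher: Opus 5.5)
Your proposal is correct and follows the paper's route: the paper justifies the corollary in one line, noting that low-weight halfspaces have margin at least $\mathrm{poly}(n,W)^{-1}$ under every distribution on the cube and then invoking \Cref{thm:main-hard-margin} with $\log(1/\rho)=O(\log(nW))$, and your shift-and-normalize computation (which gives $\rho = 1/(2nW)$ using $R\le\sqrt{n}$) just makes that explicit. One slip to fix: for $x\in\{-1,1\}^n$, shifting the threshold to $\theta+1$ changes $f$ when $\theta\not\equiv\sum_j w_j \pmod 2$, since a point with $w\cdot x=\theta+1$ would flip from positive to negative; the parity case split is unnecessary anyway, because $w\cdot x$ is an integer there too, so $\theta'=\lfloor\theta\rfloor+1/2$ works in both cases (and ``rounding'' $\theta$ should specifically mean taking the floor).
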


This follows from \Cref{thm:main-hard-margin}, as low-weight halfspaces have margin at least $\poly(n,W)^{-1}$ over any distribution over the Boolean cube. Notably, intersections of low weight halfspaces generalize DNFs, a central object of study in the PAC learning of Boolean functions. 

This connection also implies that \Cref{thm:main-hard-margin} cannot be improved by any SQ algorithm (up to the logarithmic factors in $\rho, n$ and $k$). Indeed, the SQ lower bound in \cite{klivans2007unconditional} shows that an intersection of $\sqrt{n}$ halfspaces with weights bounded by $\poly(n)$ requires $2^{\wt{\Omega}(\sqrt{n})}$ time to learn over the Boolean cube. As such, \Cref{thm:main-hard-margin} matches the SQ lower bound up to logarithmic factors in the exponent.

Strictly speaking, the requirement of a hard margin in \Cref{thm:main-hard-margin} does not allow it to be used for continuous distributions (including uniform, logconcave, Gaussian mixtures etc.). Using a weaker notion of margin, we prove the following ``soft" margin extension, where the soft margin roughly corresponds to the fraction of $\calD$ that is labeled negative and within distance $\rho$ of the region labeled positive by $f$.

\begin{theorem}
	\label{thm:main-full} 
	Let $\calD$ be a distribution, $f$ an intersection of $k$ halfspaces, and $\eps,\delta, \rho \in (0,1/3)$. Let $\nmarg=\nmarg(\rho,\calD,f)$ be the soft margin (cf. \Cref{def:soft-margin}). There exists an algorithm to PAC learn $f$ to  error at most $\nmarg + \eps$ with probability at least $1 - \delta$ in time
	\[ 2^{O\left(\sqrt{n \log(1/\rho) \log(k\eps^{-1})}\right)} \cdot \poly(k, \eps^{-1},  \rho^{-1}, \log(\delta^{-1})). \]
Moreover, the output hypothesis is an intersection of halfspaces.
\end{theorem}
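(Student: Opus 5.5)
The plan is to build a weak learner from the ``random consistent halfspace'' idea in the introduction, and then boost it. Boosting has to be done so that the final hypothesis is still an intersection of halfspaces. For that I would use a covering (set-cover) style scheme rather than majority-vote boosting, which is what the name \cover\ suggests. Let $\dpos$ and $\dneg$ be the conditional distributions on positive and negative examples. Call a halfspace $h$ \emph{good} if two things hold. First, $h$ contains all but an $\eps/\poly(k)$ fraction of $\dpos$. Second, $h$ excludes at least a $2^{-O(\sqrt{n\log(1/\rho)\log(k/\eps)})}$ fraction of the current negative distribution, with negatives inside the $\rho$-soft-margin band ignored. Suppose we can find such an $h$ repeatedly, each time reweighting or restricting $\dneg$ to the negatives not yet excluded. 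Then after $T=2^{O(\sqrt{\cdots})}\log(1/\eps)$ rounds, the intersection of the chosen halfspaces misclassifies at most an $\eps$ fraction of the negatives outside the band. It also loses at most $T\cdot \eps/\poly(k)$ of the positives, so the target-halfspace slack must be set relative to $T$. This is fine, because the cost only enters the time bound polynomially in $1/\eps$. The band contributes the additive $\nmarg$.

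The core step is \findh. Draw $m_+$ positive samples and $m_-$ negative samples, with $m_+ \gg m_-$. Solve for the convex set $K$ of halfspaces $(w,\theta)$ that classify the positives as positive and the negatives as negative. Output a uniformly random point of $K$, or of a suitably normalized slice of it, using a polynomial-time logconcave sampler. To show a good halfspace arises with the claimed probability, I would argue as follows. The negatives outside the band are covered by the $k$ target facets, so some facet $i$ is violated by at least a $1/k$ fraction of them. Condition on $m_-\approx \sqrt{n\log(1/\rho)/\log(k/\eps)}$ negatives all coming from facet $i$'s side; this has probability $k^{-m_-}$. Then the $\rho$-thickened target halfspace $(w_i,\theta_i)$, together with a small ball of radius roughly $\rho$ in parameter space around it, lies in $K$. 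The volume ratio of this ball to $K$ is at least $\rho^{O(n)}$ divided by the hard part. Concretely, I would bound
\[\Pr_{h\sim K}[h\text{ good}]\ \ge\ \frac{\mathrm{vol}(K\cap \text{good})}{\mathrm{vol}(K)},\]
and balance the terms so that $m_-\log k \approx n\log(1/\rho)/m_-$. That balance gives the $\exp(\sqrt{n\log(1/\rho)\log k})$ rate.

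The main obstacle is to show that a random consistent halfspace excludes a nontrivial fraction of fresh negatives while keeping almost all positives. Consistency with the positives is handled by an Occam/VC argument over the convex set, with $m_+=\poly(n,k,1/\eps)$. The harder direction is the negatives. Here I would use a sampling-from-convex-body argument: if a random $h\in K$ excluded fresh negatives with tiny probability, then adding one more negative constraint would shrink $\mathrm{vol}(K)$ by only a factor $1-\alpha$. Yet each of the $m_-$ constraints must cut $K$ from its initial size down to at least the $\rho^{O(n)}$ ball. A potential or telescoping argument over the sequence of negative samples then shows that on average each cut removes a fraction $\ge 1-\rho^{O(n/m_-)}$, i.e. excludes a $2^{-O(\sqrt{\cdots})}$ fraction. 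I expect the technical care to be in the normalization of $(w,\theta)$, so that volumes are finite and the margin ball is well defined, and in handling the soft margin via outlier negatives inside the band. After that, taking $\log(1/\delta)$ repetitions and validating on held-out samples completes the proof.
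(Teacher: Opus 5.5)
Your architecture matches the paper's: a uniformly random point of the consistent body, conditioning all $m_-$ negatives to lie $\rho$-beyond one facet, a $\rho$-ball of consistent parameters giving $|\calH|\ge(\rho/6)^{n+1}|B_2^n|$, the balance $m_-\log(k/\eps)\approx n\log(1/\rho)/m_-$, and covering-style boosting. The gap is in the step you call the main obstacle, which is the heart of the proof.

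\textbf{The volume argument runs backwards.} If a random $h\in K$ excludes a fresh negative with probability only $\alpha$, then the constraint $h(x)=-1$ keeps on average only an $\alpha$ fraction of $K$; it does not shrink it ``only by $1-\alpha$''. Likewise, the ball lower bound says cuts on average \emph{retain} at least a $\rho^{O(n/m_-)}$ fraction, not that they remove at least $1-\rho^{O(n/m_-)}$.

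Even with the directions fixed, telescoping the total volumes $|K_j|/|K_{j-1}|$ is not enough. After a martingale argument (each ratio is a single realized sample), it only controls the expected exclusion rate of a random point of some intermediate body $K_{j-1}$. It says nothing about the final body you actually sample from.

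The missing idea is to track only the bad part. Call $w\in\calH_+$ bad if $\Pr_{\bx\sim\dneg}[w\cdot(\bx,1)<0]<\alpha$. This is a property of $w$ alone, so each new negative constraint simply intersects the bad set. By Fubini,
$\E_{\bx\sim\dheavy}|\calK^{\bad}_{\bx}|\le \alpha\cdot\Pr_{\bx\sim\dneg}[w_1\cdot\bx<-\rho]^{-1}\cdot|\calK^{\bad}|\le(2k\alpha/\eps)\,|\calK^{\bad}|$.
After $m_-$ negatives the bad volume is therefore at most about $(2k\alpha/\eps)^{m_-}|B_2^n|$ (in expectation, then Markov). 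This is $\ll(\rho/6)^{n+1}|B_2^n|\le|\calH|$, so a uniform point of the \emph{final} body is good with high probability.

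\textbf{The claim $m_+=\poly(n,k,1/\eps)$ is wrong,} and it contradicts your own first paragraph. The argument above can only certify a negative-exclusion rate $\alpha=2^{-\Theta(\sqrt{n\log(1/\rho)\log(k/\eps)})}$. Meanwhile the hitting-set bound only guarantees each $w\in\calH_+$ a positive loss of roughly $n\log m_+/m_+$. Unless $m_+\ge\alpha^{-1}\cdot\poly(n,1/\eps)$ (or $m_+\ge nT/\eps$ in your per-round absolute accounting), a ``good'' halfspace can cut away far more positives than negatives and has no advantage. This is why the paper takes $M_+\approx 2^{\sqrt{n\log(9/\rho)\log(2k/\eps)}}\cdot\poly(n,k,1/\eps)$, with positive and negative samples of wildly different sizes.

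With both repairs your plan goes through. Two further remarks on the boosting:
\begin{itemize}
\item The paper's relative guarantee $\Pr[f=-1\mid h=-1]\ge 1-\eps$ bounds the total positive loss by $\eps$ regardless of the number of rounds, so no per-round slack of $\eps/T$ is needed.
\item Its stopping rule (halt when the region learner fails, which forces the remaining negatives into the band up to $O(\eps)$) supplies the termination criterion your ``after $T$ rounds'' leaves implicit. You need such a rule because the algorithm cannot see which negatives lie in the band.
\end{itemize}
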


We note that distributions with mild {\em anti-concentration} (such as logconcave distributions) would have a soft margin that depends only on the covariance of the distribution.  
This result provides a nearly matching upper bound for known cryptographic lower bounds~\cite{klivans2009cryptographic,tiegel2024improved}.
\subsection{Technical Overview}
\subsubsection{Overview of \Cref{thm:main-hard-margin}}
We start by discussing the proof of \Cref{thm:main-hard-margin}. Throughout this overview, we will assume without loss of generality that all halfspaces are origin centered, $\supp(\calD) \subseteq \mathbb{S}^{n-1}$, and denote the target function $f(x) = \bigwedge_{i=1}^k \sgn(w_i \cdot x)$ where $\|w_i\|_2=1$. We'll also assume for simplicity in this outline that $f(x)$ is balanced, i.e. $\Pr_{\bx \sim \calD}[f(\bx) = 1] = \frac{1}{2}$. Roughly speaking, this is without loss of generality for us as we will be using a boosting based approach. As such, if $f$ is noticeably biased, a constant function will constitute a weak learner.

We preface our proof techniques by noting that our methods differ significantly from prior work on learning an intersection of halfspaces with a margin \cite{klivans2004intersections, klivans2008margin}, which rely on showing that an intersection of halfspaces can be approximated by a low degree PTF. At a high-level, our techinques share some similarity with those of \cite{alman2026learning}, in that we will try to ``guess'' a good halfspace. That said, unlike \cite{alman2026learning}, we will not aim to find a region where all but one halfspace is fixed. Instead, our goal will be to find a halfspace $h$ that satisfies both
\begin{align*}
	&(i)  \mbox{ (Captures negatives)} \quad \Pr_{\bx \sim \calD}[h(\bx) = -1] \geq \poly(k^{-1},\rho^{-1}) \cdot 2^{-O \left(\sqrt{n \log(k) \log(1/\rho)} \right)}\\ \\
	&(ii) \mbox{ (Advantage)} \qquad \Pr_{\bx \sim \calD}[f(\bx) = -1 |h(\bx) = -1] \geq 0.51
\end{align*}

Notably, if we could find such a halfspace, then we could easily generate a weak learner that learns to accuracy $\frac{1}{2} + \gamma$ where 
	\[\gamma := 0.01 \cdot \poly(k^{-1},\rho^{-1}) \cdot 2^{-O \left(\sqrt{n \log(k) \log(1/\rho)} \right)}\]
by outputting $-1$ when $h(x) = -1$ and flipping a coin otherwise. Standard boosting results \cite{schapire1990strength, freund1995boosting} would then yield \Cref{thm:main-hard-margin}.

To compute such a halfspace $h$, we start by drawing a set $\bP$ of $M_+$ random examples, where $M_+$ is a parameter to be set later, from $\calD$ satisfying $f(\bx) = 1$ and consider all halfspaces that label $\bP$ positively, i.e.
		\[\calH_+ = \{ w \in B_2^n: w \cdot x \geq 0 \quad \forall x \in \bP\} \]
	where $B_2^n = \{w \in \R^n: \|w\| \leq 1\}$ denotes the Euclidean ball.
	
By standard VC bounds, we expect all halfspaces $w \in \calH_+$ to have a very small fraction of points with $f(x) = 1$ lying below $w$, i.e. 
	\[\Pr_{\bx \sim \calD} [w \cdot \bx < 0 | f(\bx) = 1] \leq \frac{100 n \log(M_+)}{M_+}.\]
While this seems promising for learning, such a $w$ could have $w \cdot x > 0$ for all $x \in \supp(\mathcal{D})$, which would render it useless.

Nevertheless, we'll focus our attention on ``good'' halfspaces, which we define to be the subset of $\calH_+$ with a nontrivial fraction of points below them. That is,
\begin{equation}
\label{eq:tech-good}
	\Pr_{\bx \sim \calD} [w \cdot \bx < 0 | f(\bx) = -1] \geq \frac{500 n \log(M_+)}{M_+}.
\end{equation}
Notably, any good halfspace will satisfy properties $(i)$ and $(ii)$. Indeed, $(i)$ follows immediately so long as $M_+ \leq \poly(k,\rho^{-1}) \cdot 2^{O \left(\sqrt{n \log(k) \log(1/\rho)} \right)}$. To see that $(ii)$ holds, note that
\begin{align*}
	\Pr[f(\bx) = -1 \land h(\bx) = -1] &= \Pr[h(\bx) = -1] - \Pr[f(\bx) = +1 \land h(\bx) = -1] \\
	& \geq \Pr[h(\bx) = -1] - \frac{100 n \log(M_+)}{M_+}.
\end{align*}
We can then get property $(ii)$ by combining the above with the fact that
\[\Pr_{\bx \sim \calD}[h(\bx) = -1] \geq \Pr[f(\bx) = -1] \cdot \Pr_{\bx \sim \calD} [w \cdot \bx < 0 | f(\bx) = -1] \ge \frac{250 n \log(M_+)}{M_+} \]
by \Cref{eq:tech-good} and the fact that $f$ is balanced.

Thus, it now suffices to describe how to find a good halfspace in $\calH_+$. Unfortunately, it is not hard to construct instances where most halfspaces in $\calH_+$ are bad, i.e. they do not satisfy \Cref{eq:tech-good}. To circumvent this, we will sample a set $\bN$ of $M_-$ examples and consider the set of halfspaces that negatively classify all points in $\bN$ (see \Cref{fig:margin_and_algo}):
	\[\calH = \{w \in \calH_+: w \cdot x < 0 \quad \forall x \in \bN\}\]
The attentive reader may now note that such a body could be empty as defined. Nevertheless, we will focus the analysis on the case when every sample in $\bN$ satisfies $w_1 \cdot x < 0$. We can assume without loss of generality that
	\[\Pr_{\bx \sim \calD}[ w_1 \cdot \bx < 0 | f(\bx) = -1] \geq \frac{1}{k}.\]
Under this assumption on $\bN$, we can easily see that $w_1 \in \calH$. Moreover, since every point has margin $\rho$ with respect to $w_1$, every halfspace with weights in $w_1 + \rho B_2^n$ will correctly classify $\bP$ and $\bN$.\footnote{We are assuming here that $|w_i \cdot x| \geq \rho$ for all positively labeled points $x$ and all $i \in [k]$. While this is implied by robustness (cf. \Cref{def:robust}), this is not guaranteed by our definition of margin in \Cref{def:margin}. That said, this  assumption will not significantly change the argument and we make it for simplicity.} Thus it follows that we have the stronger property that
	\[ w_1 + \rho B_2^n \subseteq (1 + \rho) \calH\]
(Note that we must scale $\calH$ as points in $w_1 + \rho B_2^n$ may have length $1 + \rho$.)

On the other hand, for a bad halfspace $w \in \calH_+$, we note that 
	\[\Pr_{\bx \sim \calD} \left[w \cdot \bx < 0 | w_1 \cdot \bx < 0 \right] \leq k \cdot \frac{500n\log(M_+)}{M_+}.\]
So the probability a bad halfspace survives $M_-$ such samples is bounded by
	\[\left( k \cdot \frac{500n\log(M_+)}{M_+} \right)^{M_-}\]
from which we conclude that the expected volume of the bad halfspaces is at most
	\[|\calH_+| \cdot \left( k \cdot \frac{500n\log(M_+)}{M_+} \right)^{M_-} \leq |B_2^n| \left( k \cdot \frac{500n\log(M_+)}{M_+} \right)^{M_-}.\]
Now if the volume of the bad halfspaces in $\calH$ is smaller than the volume of the good halfspaces in $\calH$, randomly sampling from $\calH$ would yield a good halfspace $h$ with good probability. As such, to sample a halfspace $h$, it suffices that
	\begin{equation}
	\label{eq:tech-volume}
		\left( k \cdot \frac{500n\log(M_+)}{M_+} \right)^{M_-} \leq \left( \frac{\rho}{1 + \rho} \right)^n |B_2^n|.
	\end{equation}

To set parameters, we note that the above is only successful if every point in $\bN$ satisfies $w_1 \cdot x < 0$, which happens with probability $k^{- M_{-}}$. Since we will need to repeat roughly $k^{M_-}$ times, we will need
		\[k^{M_-} \leq \poly(n,k,\rho^{-1}) 2^{O \left(\sqrt{n \log(1/\rho) \log(k)} \right)}\]
to bound the runtime. Balancing parameters, we can then set 
	\[ M_- := \sqrt{n \log(2/\rho) / \log(k)} \quad \quad M_+ = \poly(n,k, \log(\rho^{-1})) \cdot 2^{\sqrt{n \log(k) \log(2/\rho)}}.\]

\begin{figure}[t!]
    \centering
    \begin{subfigure}{0.49\textwidth}
		\includegraphics[width=\linewidth]{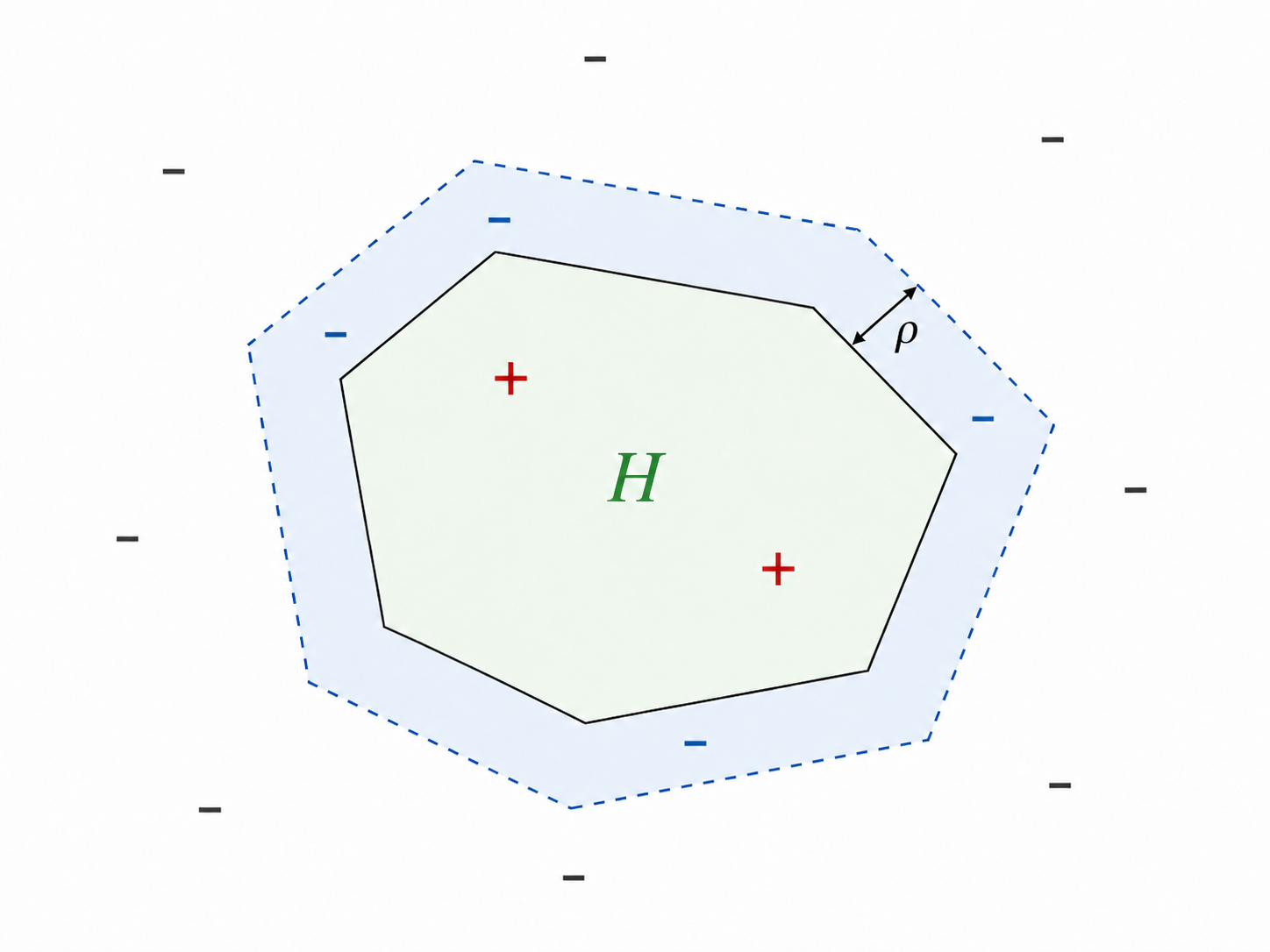}
        \caption{Soft Margin}
    \end{subfigure}
    \hfill
    \begin{subfigure}{0.49\textwidth}
		\includegraphics[width=\linewidth]{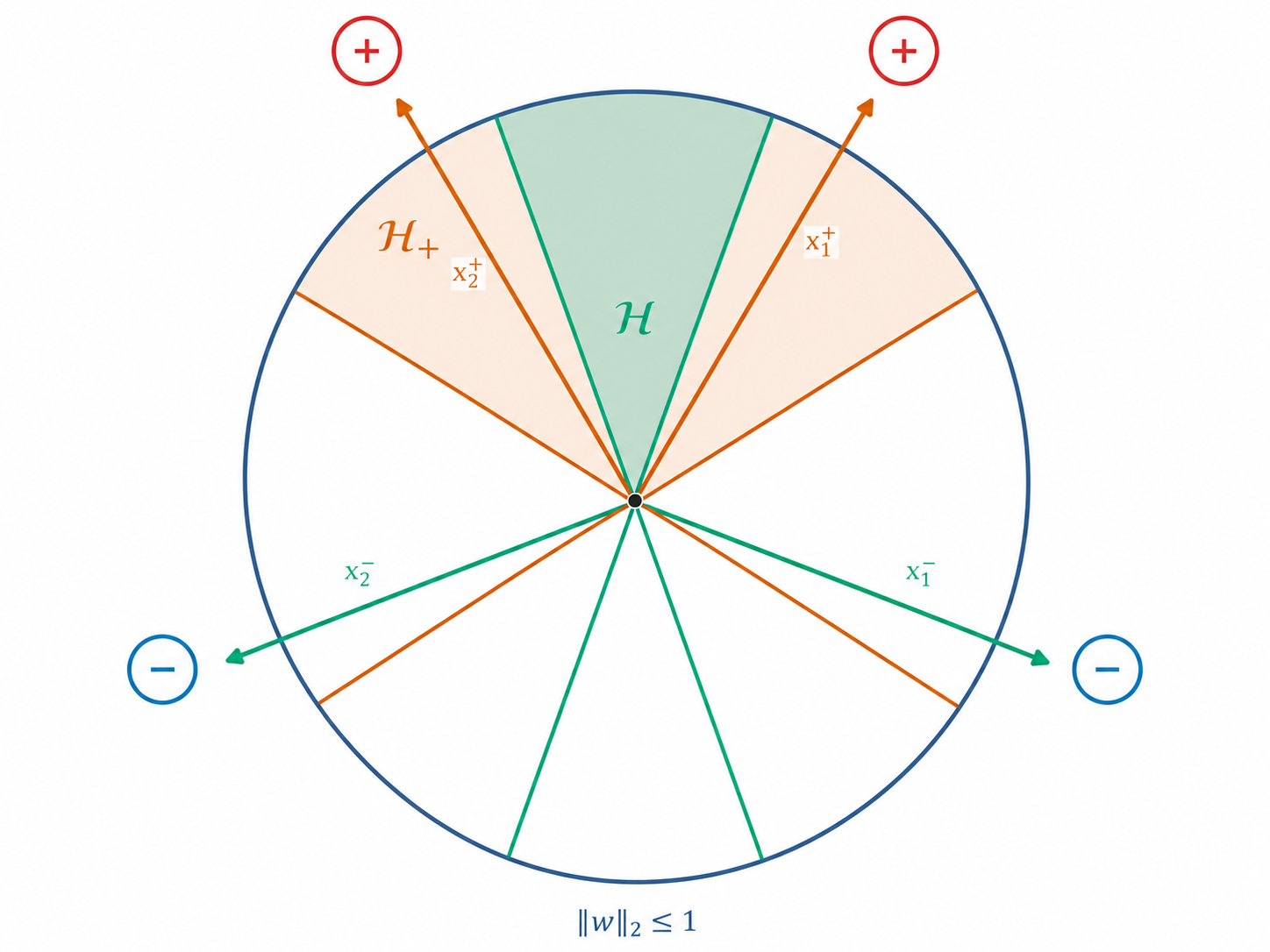}
        \caption{Good hypotheses}
    \end{subfigure}
    \hfill
    \caption{The first figure shows the soft margin in blue, i.e., points outside the intersection but within distance $\rho$. The second figure is the set of good hypothesis $\calH$ (shaded green) and the set $\calH_+$ of hypotheses that correctly classify the positive samples. Note that the $+$ and $-$ vectors correspond to positive and negatively labeled examples and lead to the orange and green halfspace constraints.}
    \label{fig:margin_and_algo}
\end{figure}

\subsubsection{Overview of \Cref{thm:main-full}}
To prove our soft margin results, we note that our previous learner only used $M_-=\sqrt{n \log(2/\rho)/\log(k)}$ negative samples to find a good halfspace $h$. Intuitively, so long as these negatively labelled samples all satisfy $w_1 \cdot x < -\rho$, the algorithm essentially should behave as though $f$ truly had a $\rho$ margin. While we must make some tweaks to the proof, this intuition can be made rigorous. 

One complication at this stage that arises is that our soft margin results will not be amenable to black box boosting arguments, particularly as we wish to learn using a hypothesis that is an intersection of halfspaces.\footnote{We note that if we didn't need to output an intersection of halfspaces, then we could apply the boosting argument from \cite{kalai2008agnostic}; we do not output an agnostic learner as in their theorem, but their proof appears to be robust to our setting of learning with ``soft margin''.}
To circumvent this, we simply note that by increasing the size of $M_+$ by a factor of $\poly(\eps^{-1})$ we can in fact find a halfspace $h$ satisfying 
\begin{align*}
&(i) \quad \Pr_{\bx \sim \calD}[h(\bx) = -1] \geq \poly(k,\rho^{-1}, \eps^{-1}) \cdot 2^{-O \left(\sqrt{n \log(k\eps^{-1}) \log(1/\rho)} \right)}	\\ \\
&(ii) \quad \Pr_{\bx \sim \calD}[f(\bx) = -1 |h(\bx) = -1] \geq 1 - \eps
\end{align*}

Crucially, $h$ has accuracy $1 - \eps$ on all points satisfying $h(x) = -1$. Thus, we can safely label these points negatively and recurse on the points satisfying $h(x) = 1$. In this way, we will build a hypothesis of the form $\bigwedge_{i=1}^t h_i(x)$. We continue recursing until either an $\eps$ fraction of the mass under $\calD$ satisfies $\bigwedge_{i=1}^t h_i(x) = 1$ or we are left with a distribution on which we cannot find a weak learner, i.e. a halfspace satisfying conditions $(i)$ and $(ii)$ above.

Intuitively, such a booster slowly covers the space with a union of regions of the form $\calR_i = \{x: h_i(x) = -1\}$. By Property $(ii)$, we have that a $1 - \eps$ fraction of points under $\calD$ in $\bigcup_i \calR_i$ satisfy $f(x) = -1$. This allows us to bound the false negative error rate of the final hypothesis, $\Pr_{\bx \sim \calD}[f(\bx) = 1 \land \bigwedge_{i=1}^T h_i(\bx) = -1]$, by $\eps$. To bound the false positive error, we note that we always succeed in producing a weak learner so long as an $\eps$ fraction of points satisfy $w_i \cdot x \leq -\rho$ for some $i \in [k]$. Thus, since our weak learner failed in the final iteration, all but an $\eps$ fraction of points $x$ satisfying $\bigwedge_{i=1}^T h_i(x) = 1$ and $f(x) = -1$ must have small margin. (In the case where there are only an $\eps$ fraction of points satisfying $\bigwedge_{i=1}^T h_i(x) = 1$, this is trivially true.) As the total fraction under $\calD$ of negatively labelled small margin points is at most $\nmarg(\rho,\calD,f)$, it follows that the false positive error rate, $\Pr_{\bx \sim \calD}\left [\bigwedge_{i=1}^T h_i(\bx) = 1 \land f(\bx) = -1 \right]$, is bounded by $\eta(\rho, \calD, f) + \eps$. Thus, this boosting procedure will output a hypothesis of error $\nmarg(\rho, \calD, f) + O(\eps)$, as desired.

\section{Preliminaries}

\textbf{Notation.} Throughout our discussion, we let $f(x) = \bigwedge_{i=1}^k \sign(w_i \cdot x)$ denote the target function, where we assume $\|w_i\|_2 = 1$ and that $\supp(\calD) \subseteq \mathbb{S}^{n-1}$ (as we'll see in \Cref{sec:margin}, these assumptions that the halfspaces are origin-centered, i.e., $\theta_i = 0$, and points lie on the unit sphere will be without loss of generality). Moreover, we let $\dpos$ and $\dneg$ denote $\calD$ conditioned on $f(\bx) = 1$ and $f(\bx) = -1$ respectively. For a convex body $\calK$, we let $|\calK|$ denote its Lebesgue volume. We denote random variables in boldface, e.g. $\br$.

\subsection{PAC Learning}
We now briefly recall the PAC learning model of Valiant \cite{valiant1984theory}. In this model, we wish to learn a class of Boolean-valued functions $\calC$ under an unknown distribution $\calD$. We say that an algorithm $(\eps, \delta)$-PAC learns the class $\calC$ if for any $f \in \mathcal{C}$ we have the following: given labeled examples of the form $(\bx, f(\bx))$ where $\bx \sim \mathcal{D}$, with probability $1 - \delta$, the algorithm outputs a hypothesis $h$ such that $\Pr_{\bx \sim \calD} [h(\bx) \neq f(\bx)] \leq \eps$. The complexity of a PAC-learner is measured by the number of samples it needs and the time it takes. 

We will use the following classical result.

\begin{lemma}[Hitting Set Lemma \cite{blumer1989learnability}]
\label{lem:hitting-set}
Let $\calD$ denote a distribution over $\calX$, $\calC$ denote a class of Boolean functions over $\calX$ of VC dimension at most $d$ and containing the constant $+1$ function, and $\eps \in (0,1/2)$, then with probability $1 - 2^{-d}$ a set $S$ of 
\[\frac{32}{\eps} \left( d \log \frac{1}{\eps} \right)\]
samples from $\calD$ satisfies $S \cap \{x \in \calX: c(x) = -1\} \not = \emptyset$ for all $c \in \calC$ such that $\Pr_{\bx \sim \calD} \left[ c(\bx) = -1 \right] \geq \eps$.
\end{lemma}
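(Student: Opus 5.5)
This is the classical $\eps$-net theorem, and I would prove it with the double-sampling (symmetrization) argument of Blumer et al. Call $c \in \calC$ \emph{heavy} if $\Pr_{\bx \sim \calD}[c(\bx) = -1] \geq \eps$, and let $A$ be the bad event that some heavy $c$ has $S \cap \{c = -1\} = \emptyset$, where $|S| = m = \frac{32}{\eps} d \log \frac{1}{\eps}$. Draw an independent ghost sample $T$ of size $m$. Let $B$ be the event that some $c \in \calC$ misses $S$ but $T$ contains at least $\eps m/2$ points of $\{c = -1\}$.

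First I relate $A$ to $B$. Condition on $A$ and fix a heavy witness $c$. The number of points of $T$ in $\{c=-1\}$ is binomial with mean at least $\eps m \geq 8$. By Chebyshev (or a Chernoff bound), it is at least $\eps m/2$ with probability at least $1/2$. Hence $\Pr[B] \geq \Pr[A]/2$.

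Next I bound $\Pr[B]$ by symmetrization. View $S \cup T$ as $2m$ points drawn first, then randomly split into two halves. By Sauer--Shelah, $\calC$ induces at most $(2em/d)^d$ distinct labelings on the $2m$ points; this is where the VC dimension enters. Fix one labeling whose negative set contains $r \geq \eps m/2$ of the $2m$ points. The probability that a uniformly random split places all $r$ of them in $T$ is
\[\binom{m}{r}\Big/\binom{2m}{r} \leq 2^{-r} \leq 2^{-\eps m/2}.\]
A union bound then gives
\[\Pr[A] \leq 2\Pr[B] \leq 2(2em/d)^d\, 2^{-\eps m/2}.\]

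The main obstacle is the parameter arithmetic: I must check that $m = \frac{32}{\eps} d\log\frac1\eps$ makes this bound at most $2^{-d}$. With $m$ as given, $\eps m/2 = 16 d \log(1/\eps)$. Also $(2em/d)^d = (64e\log(1/\eps)/\eps)^d$, so
\[\log_2\!\left(2em/d\right) \leq \log_2(1/\eps) + \log_2\log(1/\eps) + 8 \leq c\,\log(1/\eps) + 8,\]
which, using $\eps < 1/2$, is comfortably below $15\log(1/\eps) - 2$ for the constants involved. The point is that $\log(1/\eps) \geq 1$ up to the base convention, so the constant $32$ absorbs both the additive $8$ and the factor $2$. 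If the base of $\log$ makes this tight near $\eps \approx 1/2$, I would handle that edge case by noting $m \geq 32d$ already suffices there.

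The hypothesis that $\calC$ contains the constant $+1$ function plays no role beyond ensuring nonemptiness and matching the convention of the paper.
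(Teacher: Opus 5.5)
Your double-sampling (symmetrization) argument is correct, and it is the classical Blumer--Ehrenfeucht--Haussler--Warmuth / Haussler--Welzl proof that the paper invokes by citation without reproducing. The only loose spot is the final arithmetic: with $L = \log(1/\varepsilon)$ you need $\frac{1}{d} + 1 + \log_2(64e) + \log_2 L + \log_2(1/\varepsilon) \le 16L$, which holds directly for either base of the logarithm, including the case $\varepsilon \to 1/2$ with natural log, where $m > 44d$. So your fallback remark that $m \ge 32d$ suffices near $\varepsilon = 1/2$ is unnecessary, and with your bound $2(2em/d)^d 2^{-\varepsilon m/2}$ it would actually fail.
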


We will also need accuracy boosting for our results. In particular,

\begin{definition}[Weak Learner]
	We say that an algorithm $\calA$ is a weak learner with advantange $\gamma$ for a class $\calC$ if for any distribution $\calD$ and $f \in \calC$, with probability $3/4$, the algorithm outputs a hypothesis $h$ such that $\Pr_{\bx \sim \calD}[f(\bx) = h(\bx)] \geq \frac{1}{2} + \gamma$.
\end{definition}

Given such a hypothesis that does slightly better than random, we can get an algorithm with small error using accuracy boosting.

\begin{theorem}[\cite{schapire1990strength, freund1995boosting}]
\label{thm:boosting}
	Let $\eps,\delta \in (0,1)$ and suppose that $\calA$ is a weak learner for $\calC$ with advantage $\gamma$, then there is an algorithm that with probability $1 - \delta$ outputs a hypothesis $h$ such that $\Pr_{\bx \sim \calD}[f(\bx) \not = h(\bx)] \leq \eps$. Moreover, the algorithm runs in time $\poly(\eps^{-1}, \gamma^{-1}, \log(\delta^{-1}), T)$, where $T$ is the running time of the weak learner.
\end{theorem}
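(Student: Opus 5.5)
We would prove \Cref{thm:boosting} by running AdaBoost on a fixed sample and controlling generalization with a sample-compression argument. The compression argument is needed because nothing is assumed about the VC dimension of the weak learner's hypotheses.

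\textbf{Step 1: amplify the weak learner's confidence.} The weak learner $\calA$ succeeds only with probability $3/4$. Given any distribution $\calD'$ that we can sample from, we run $\calA$ independently $O(\log(1/\delta'))$ times. We estimate the accuracy of each output on fresh samples from $\calD'$, using $O(\gamma^{-2}\log(1/\delta'))$ samples per candidate. We keep a candidate whose empirical accuracy is at least $\frac12+\frac{\gamma}{2}$. By a Chernoff bound, with probability $1-\delta'$ this yields a hypothesis with advantage at least $\gamma/2$. From now on we treat the weak learner as succeeding except with probability $\delta'$, at the cost of halving $\gamma$.

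\textbf{Step 2: AdaBoost on a sample.} We draw $S=\{x_1,\dots,x_m\}$ from $\calD$ with labels, and start with the uniform weights $D_1$ on $S$. In round $t$ we run the amplified weak learner on $D_t$; this is possible because samples from $D_t$ are obtained by resampling from $S$. We obtain $h_t$ with weighted error $\frac12-\gamma_t$, where $\gamma_t\ge \gamma/2$. We then apply the standard multiplicative update $D_{t+1}(i)\propto D_t(i)\exp(-\alpha_t f(x_i)h_t(x_i))$ with $\alpha_t=\frac12\ln\frac{1+2\gamma_t}{1-2\gamma_t}$. The output is $H=\sgn\big(\sum_t\alpha_t h_t\big)$. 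The usual potential argument bounds the training error by $\prod_t\sqrt{1-4\gamma_t^2}\le e^{-\gamma^2T/2}$. Taking $T=O(\gamma^{-2}\log m)$ makes this smaller than $1/m$, so $H$ is consistent with $S$. We union bound the weak-learner failures with $\delta'=\delta/(2T)$.

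\textbf{Step 3: generalization via compression.} Each $h_t$ is produced by the weak learner from at most $m_0=\poly(T_{\mathrm{weak}},\gamma^{-1},\log(1/\delta'))$ points of $S$, together with its internal random bits. Hence $H$ is determined by a multiset of at most $k=T m_0$ points of $S$ plus the randomness, and we fix the randomness in advance. The standard sample-compression bound (Littlestone--Warmuth) then applies. We union bound over the $\le m^{k}$ choices of compression set. For each fixed choice, a hypothesis with true error $>\eps$ is consistent with the remaining $m-k$ independent points with probability at most $(1-\eps)^{m-k}$. So
\[
m = O\!\left(\frac{k\log m+\log(1/\delta)}{\eps}\right)
\]
suffices for $\Pr_{\bx\sim\calD}[H(\bx)\ne f(\bx)]\le\eps$ with probability $1-\delta/2$.

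This bound is implicit in $m$, because $k$ depends on $T$ and hence on $\log m$. It is solved by $m=\poly(\eps^{-1},\gamma^{-1},T_{\mathrm{weak}},\log\delta^{-1})$. Every step, namely resampling, reweighting and calls to $\calA$, runs in time polynomial in these quantities. This gives the claimed running time.

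The main obstacle is Step 3. Without a VC bound on the weak hypotheses, consistency on $S$ does not directly imply low true error. The compression argument handles this, but it requires the number of points the weak learner consumes to be bounded by its running time, and the circular dependence between $m$ and $T$ has to be resolved. If this became delicate, the fallback would be Freund's boost-by-filtering. That method draws fresh examples from $\calD$ in each round, which avoids generalization arguments entirely, at the cost of a more intricate analysis of the filtering rejection rate.
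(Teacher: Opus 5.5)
\textbf{Context.} The paper gives no proof of this statement; it imports it from Schapire and Freund. In their constructions each intermediate distribution is obtained by filtering fresh draws from $\mathcal{D}$ (essentially your fallback), so the final error is bounded directly, with no generalization step.

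\textbf{Where your argument fails.} AdaBoost on a fixed sample plus a compression bound is a legitimate alternative route. Steps 1 and 2 are fine up to constants: a candidate passing an empirical test at $\frac12+\frac{\gamma}{2}$ is only guaranteed advantage about $\gamma/4$, which is harmless. Step 3, however, relies on a claim that is false for the algorithm you specified. $H$ is \emph{not} determined by the $k$ compressed points and the pre-fixed coins. Your coefficient $\alpha_t=\frac12\ln\frac{1+2\gamma_t}{1-2\gamma_t}$ is a function of the exact weighted error $\frac12-\gamma_t=\sum_i D_t(i)\,\mathbb{I}[h_t(x_i)\neq f(x_i)]$. That quantity depends on the labels of, and weights on, all $m$ points of $S$, including the $m-k$ points you want to treat as an independent test set. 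So for a fixed compression index sequence, the reconstruction yields only the family $\{\sgn(\sum_t\alpha_t h_t):\alpha\in\mathbb{R}^T\}$, not a single hypothesis. The estimate $(1-\varepsilon)^{m-k}$ for a fixed hypothesis surviving the untouched points does not apply, and the union bound over $m^k$ index sequences does not close the argument.

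\textbf{Repair.} The fix is standard and cheap. Every round is guaranteed edge at least $\gamma'$ (say $\gamma'=\gamma/4$), so use the same step $\alpha=\frac12\ln\frac{1+2\gamma'}{1-2\gamma'}$ in every round. The normalizer $Z_t=(\frac12+\gamma_t)e^{-\alpha}+(\frac12-\gamma_t)e^{\alpha}$ is decreasing in $\gamma_t$, so $Z_t\le\sqrt{1-4\gamma'^2}$. The training-error bound $\prod_t Z_t\le e^{-2\gamma'^2T}$ therefore survives. Meanwhile $H$ becomes the unweighted majority of $h_1,\dots,h_T$ (take $T$ odd), which is a deterministic function of the compressed points and the coins.

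Alternatively, keep adaptive $\alpha_t$ and use a hybrid bound. For each fixed index sequence the family above has VC dimension at most $T$. Replace $(1-\varepsilon)^{m-k}$ by the realizable VC bound on the $m-k$ untouched points; this adds only $O(T\log m/\varepsilon)$ to the required $m$.

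With either change the rest of your accounting goes through, including solving the implicit inequality for $m$ and the polynomial running time. Compared with the filtering proofs, your route gains a clean potential-function analysis with no rejection-rate bookkeeping. Its cost is exactly this generalization step, which must be set up so that the output really is a function of the compression set.
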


\subsection{Halfspaces}
\label{sec:margin}
The central object of study in this work will be halfspaces, i.e. functions of the form $\sgn(w \cdot x - \theta)$. We will also need the notion of a margin.  

\begin{definition}[Margin]
\label{def:margin}
	Given a distribution $\calD$ and an intersection of halfspaces $f(x) = \bigwedge_{i=1}^k \sgn(w_i \cdot x - \theta_i)$, where $\|w_i\|_2 = 1$ for all $i$, we say that $\calD$ has margin $\rho$ with respect to $f$ if for all $x \in \supp(\calD)$ with $f(x) = -1$ there exists some $i \in [k]$ such that
		\[\frac{w_i \cdot x - \theta_i}{R} \leq -\rho\]
	where $R = \sup_{x \in \supp(\calD)} \|x\|_2$. 
\end{definition}

Geometrically, such a notion implies that every negative point has distance at least $\rho \cdot R$ below some halfspace. This is the same notion used in \cite{gottlieb2021learning}, and differs slightly from the notion of robustness introduced by \cite{Arriaga2006algorithmic} and used in \cite{klivans2008margin}.

\begin{definition}[$\rho$-Robust Distribution]
\label{def:robust}
	We say that a distribution $\calD$ is $\rho$-robust with respect to a Boolean-valued function $f$ if 
		\[ \frac{\inf \{\|x-y\|_2: x \in \supp(\calD), y \in \mathbb{R}^n, f(x) \not = f(y)\}}{R} \geq \rho\]
	where $R = \sup_{x \in \supp(\calD)} \|x\|_2$.
\end{definition}

The difference is that our definition of margin is in terms of distance with respect to some supporting hyperplane while robustness measures distance to the polyhedron defined by $f$. Nevertheless, robustness implies a margin as in our definition. 

\begin{lemma}[Theorem 4.2 in \cite{gottlieb2021learning}]
\label{lem:robust-to-margin}
Let $f$ be an intersection of $k$ halfspaces and suppose that $\calD$ is a $\rho$-robust distribution with respect to $f$ and $\supp(\calD) \not \subset f^{-1}(-1)$, then $\calD$ has a $\frac{1}{2} \rho^2$ margin with respect to $f$.
\end{lemma}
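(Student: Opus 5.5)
We fix a negative point $x \in \supp(\calD)$ with $f(x) = -1$ and aim to exhibit an index $i$ with $w_i \cdot x - \theta_i \le -\tfrac12 \rho^2 R$. By scaling we may assume $R = 1$, so $\|x\| \le 1$ and robustness gives $\mathrm{dist}(x, P) \ge \rho$, where $P = \{y : w_i \cdot y > \theta_i \ \forall i\}$ is the positive region. The hypothesis $\supp(\calD) \not\subset f^{-1}(-1)$ supplies a positive support point $p$ with $\|p\| \le 1$. Robustness applied to $p$ shows that the ball $B(p,\rho)$ lies entirely in $P$, since every point within distance $\rho$ of $p$ must share its label. So $P$ contains a ball of radius $\rho$ centered at a point of norm at most $1$, and $x$ is at distance at least $\rho$ from $P$.

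The key step is a geometric argument along the segment from $x$ to the ball. Since $x \notin P$, some constraint is violated. We want a constraint violated by a \emph{quantitative} amount, so we consider the convex hull $C$ of $\{x\} \cup B(p,\rho)$. Let $q$ be the nearest point to $x$ in $\overline{P}$, and let $u = (x-q)/\|x-q\|$, with $\|x - q\| \ge \rho$. At $q$ the active facets define the normal cone, and $u$ lies in the cone generated by the active normals $-w_i$, $i \in I$. Rather than decomposing that cone, I would use the following cleaner route. For each $i$ we have $w_i \cdot p - \theta_i \ge \rho$, because $B(p,\rho) \subseteq P$ and $\|w_i\|=1$. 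Consider the point $y_t = p + t(x-p)$ for $t \in [0,1]$, and let $t^*$ be the exit time from $\overline{P}$, attained via some facet $i$, so $w_i \cdot y_{t^*} = \theta_i$. The function $g(t) = w_i\cdot y_t - \theta_i$ is affine, with $g(0) \ge \rho$ and $g(t^*) = 0$. Therefore $g(1) = g(0) + (g(t^*) - g(0))/t^* \le \rho - \rho/t^* = -\rho(1-t^*)/t^*$.

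It remains to lower bound $1 - t^*$, which is the step I expect to be the main obstacle. Here $\|x - y_{t^*}\| = (1-t^*)\|x-p\| \ge \mathrm{dist}(x,P) \ge \rho$, and $\|x - p\| \le 2$. These give $1 - t^* \ge \rho/2$, and hence $t^* \le 1$ and $(1-t^*)/t^* \ge \rho/2$. Substituting, $w_i \cdot x - \theta_i \le -\rho \cdot \rho/2 = -\tfrac12\rho^2$, which is exactly the margin condition of \Cref{def:margin} after undoing the scaling by $R$. The subtle points are boundary and openness issues: $P$ is open, so one should argue with $\overline{P}$ and infima, and one must check that the exit facet is well defined. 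Neither of these affects the constants.
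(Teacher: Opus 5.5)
Your proof is correct and is essentially the paper's argument. Both take a positive support point $p$, which by robustness has slack at least $\rho R$ in every constraint, use $\|x-p\|\le 2R$, and use linearity of the constraints along the segment from $p$ to $x$; the paper argues by contradiction that $z=(1-\frac{\rho}{2})x+\frac{\rho}{2}p$ would otherwise be a positive point within $\rho R$ of $x$, which is just the contrapositive of your exit-point argument. Your direct version, which picks the constraint active at the exit point of $\overline{P}$, handles the strict-versus-nonstrict boundary issues slightly more cleanly. The unused detour through the convex hull $C$, the nearest point $q$, and the normal cone can simply be deleted.
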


We include a proof here for the reader's convenience.

\begin{proof}
Let $y \in \supp(\calD)$ be such that $f(y) = 1$. Note that since $\calD$ is robust, it follows that 
	\[\frac{w_i \cdot y - \theta_i}{R} \geq \rho\]
for all $i \in [k]$, as by robustness $f(y - \rho R w_i) = +1$. Now, to obtain a contradiction, suppose $\calD$ does not have a $\rho$ margin. It then follows that there exists a point $x \in \supp(\calD)$ such that
	\[\frac{w_i \cdot x - \theta_i}{R} \geq -\frac{1}{2}\rho^2\]
for all $i \in [k]$ and $f(x) = -1$. In this setting, we can define $z = \left(1- \frac{1}{2} \rho \right) x + \frac{1}{2} \rho \cdot y$. Clearly, $\|z-x\|_2 \leq \rho R$. We can also observe that for any $i \in [k]$
	\[w_i \cdot z - \theta_i = \left(1- \frac{1}{2} \rho \right) \cdot w_i \cdot x - \theta_i + \frac{1}{2} \rho \cdot (w_i \cdot y) \geq \left(1- \frac{1}{2} \rho \right) \cdot \left( w_i \cdot x - \theta_i \right) + \frac{1}{2} \rho^2 R \geq 0.\]
Thus, $f(z) = 1$, but this violates robustness. 
\end{proof}

The above notion of margin is relatively rigid and fails for most continuous distributions as a small amount of probability mass can be arbitrarily close to hyperplanes. Given this, we will also consider the following weaker notion of margin so that our results apply more generally. 

\begin{definition}[Soft Margin]\label{def:soft-margin}
	For $\rho \in (0,1)$,  the $\rho$-soft margin of distribution $\mathcal{D}$ with respect to an intersection of halfspaces $f = \bigwedge_{i=1}^k \sgn(w_i \cdot x - \theta_i)$ is
		\[\nmarg(\rho, \calD, f) := \Pr_{\bx \sim \calD} \left[ \min_{i \in [k]} \frac{w_i \cdot \bx - \theta_i}{R} \in [-\rho, 0] \right ] \]
	where $R = \sup_{x \in \supp(\calD)} \|x\|_2$.
\end{definition}

Intuitively, this quantity measures the fraction of points in the distribution that are ``barely" not in the polyhedron defined by the intersection of halfspaces, i.e., they are outside but within distance $\rho$ (such points would all have negative labels).\footnote{As a subtlety, this intuition holds up to a quadratic factor by \Cref{lem:robust-to-margin} so long as the polyhedron contains a ball of radius $\rho$. That said, for degenerate convex bodies the notion of distance to the body and soft margin could differ significantly.} This quantity also works for continuous distributions, and our algorithms will be able to learn arbitrary intersections, over arbitrary distributions, to accuracy $1 - \nmarg(\rho, \calD,f) - \eps$.

Finally, we will assume throughout that all points lie on the unit sphere and all halfspaces go through the origin. This can be achieved by first adding an $(n+1)$st coordinate equal to $\sqrt{R^2 - \|x\|_2^2}$  and rescaling so the vector has length $\frac{1}{\sqrt{2}}$. Afterwards, we add a $(n+2)$nd coordinate equal to $\frac{1}{\sqrt{2}}$ to each point $x$ and consider the weight vectors $w_i' = \frac{1}{\sqrt{1 + \theta_i^2/R^2}}(w_i, 0, -\theta_i/R)$. To see that this only changes the notions of margin and soft margin by at most a constant factor, note that we can assume that $|\theta_i| \leq R$ as otherwise $w_i \cdot x - \theta_i$ is constant. After this transformation, we can see that the margin and soft margin change by a factor of at most $2$ since
	\[|w_i' \cdot x| = \frac{\left| \frac{1}{\sqrt{2}\cdot R} w_i \cdot x - \frac{1}{\sqrt{2} \cdot R}\theta_i \right|}{\sqrt{1 + \theta_i^2/R^2}} \geq \frac{\left| w_i \cdot x - \theta_i \right|}{2 \cdot R}.\]

\subsection{Sampling from a Polytope}
We will crucially need to sample from a convex body in our algorithm. In particular, we use the following result for efficient sampling, which follows a long line of polynomial-time algorithms~\cite{dyer1991random,lovasz1993random,kannan1997random,lovasz2006simulated,lovasz2006hit,kannan2012random,cousins2018gaussian}. 
We note that since the specific polynomial complexity does not matter for our main result (whose complexity is sub-exponential in dimension), we could also use the classical rounding algorithm of \cite{grotschel2012geometric} based on the Ellipsoid method, followed by any polynomial-time uniform sampling algorithm such as the ball walk~\cite{kannan1997random} or hit-and-run~\cite{lovasz2006hit}.

\begin{theorem}[\cite{JLLV26reducing}]\label{thm:sampling}
Given a convex body $K \subset \R^n$ and numbers $r,R > 0$ s.t. $rB_2^n \subseteq K \subseteq RB_2^n$, where $B_2^n$ denotes the unit ball in $\R^n$ centered at the origin, for any $\varepsilon, \delta > 0$ there is a randomized algorithm that with probability at least $1-\delta$ outputs a point $x$ from a distribution that is within TV-distance $\varepsilon$ of the uniform distribution over $K$ using $\tilde{O}(n^{3.5}\log(R/r) \log(1/\varepsilon\delta))$ queries to the membership oracle and $\tilde{O}(n^2)$ arithmetic operations per query.    
\end{theorem}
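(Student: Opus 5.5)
The plan is to split the sampler into a \emph{rounding} stage, which brings $K$ into near-isotropic position, and a \emph{sampling} stage, which samples from a near-isotropic body. Every membership query is made through the current linear map $T$, so it costs one matrix--vector product, i.e.\ $\tilde O(n^2)$ arithmetic. That accounts for the per-query cost in the statement. The sampling kernel in both stages will be the In-and-Out (proximal) sampler or, equivalently, the ball walk. I will assume its mixing bound from a warm start: for a convex body with covariance $\Sigma$, it mixes in $\tilde O(\mathrm{tr}(\Sigma)\cdot \psi_n^2 \cdot \log(1/\varepsilon))$ steps in units where the step size is $\Theta(1/\sqrt n)$. Combining this with Klartag's bound $\psi_n = O(\sqrt{\log n})$ on the KLS constant gives $\tilde O(n^2\log(1/\varepsilon))$ queries per sample once the body is near-isotropic. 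The stated TV/Rényi form of the mixing bound then turns into the $\log(1/\varepsilon\delta)$ factor, after amplifying the confidence of the randomized rounding by a median or repetition trick.

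For the rounding stage I will follow the ``reduce isotropy to KLS'' strategy and avoid the classical $n^4$ approach. Start from $rB_2^n\subseteq K$, rescaled so that $r=1$. Run $O(\log(R/r))$ phases. In phase $j$ the target body is $K_j = K\cap 2^j B_2^n$, or equivalently a Gaussian restriction $K$ with variance $\sigma_j^2 = 4^j$ in an annealing schedule. The invariant is that, after applying the current $T$, the body $K_j$ is $O(1)$-isotropic in all but a few directions and has $\mathrm{tr}(\Sigma_j)=O(n)$. Samples from phase $j$ are a warm start for phase $j+1$, because consecutive bodies differ by a constant factor in volume-weighted measure. In each phase I draw $\tilde O(n)$ samples. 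These samples come out of the walk as a single long trajectory with independence-up-to-mixing resets, so they should not be counted as fresh samples. I then estimate the covariance, update $T$, and pass to the next phase. The accounting I aim for is that each phase uses $\tilde O(n^{2.5})$ queries rather than $\tilde O(n^3)$. Summing over the $O(\log(R/r))$ phases then gives the claimed $\tilde O(n^{3.5}\log(R/r))$ bound. After this, sampling from the rounded $K$ costs only $\tilde O(n^2\log(1/\varepsilon\delta))$ more.

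The main obstacle is achieving $\tilde O(n^{2.5})$ per phase. A naive count gives $\tilde O(n)$ covariance samples times $\tilde O(n^2)$ queries each, which is $n^3$ per phase. To get the saving, I would first use the fact that only $\tilde O(\sqrt n)$ ``fresh'' well-mixed samples are needed per phase to refine $T$ to within a constant factor in operator norm, in the spirit of Rudelson-type covariance estimation for the \emph{relative} change between $\Sigma_j$ and $\Sigma_{j+1}$. This is possible because $K_{j+1}$ is a constant-factor perturbation of the already-rounded $K_j$. I would then argue that the walk's mixing time in phase $j+1$ depends only on $\mathrm{tr}(\Sigma)$, which stays $O(n)$ under this coarse rounding, through the KLS-based trace bound. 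If this relative-estimation step fails, the fallback is to double the covariance scale only in the directions that have grown and to bound the cost through the sum of the eigenvalues. That fallback should still give $n^{2.5}$ per phase up to polylogarithmic factors.
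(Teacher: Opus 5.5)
The paper gives no proof of this statement. It imports it from \cite{JLLV26reducing}, and the authors remark that any polynomial-time sampler (e.g.\ ellipsoid rounding followed by the ball walk or hit-and-run) would suffice for their application. The exponent $3.5$ is what an $n^3\psi^2$-type rounding cost gives under the older estimate $\psi=O(n^{1/4})$; with Klartag's bound, which you invoke, the natural target would be $\tilde O(n^3)$. So your sketch has to stand on its own. Your two-stage architecture and the $\tilde O(n^2)$ arithmetic per query are fine, but the rounding stage does not work as described.

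The first gap is the warm start between phases. Since $0\in K$ you do have $K_j\subseteq K_{j+1}\subseteq 2K_j$. This only gives $\mathrm{vol}(K_{j+1})\le 2^n\,\mathrm{vol}(K_j)$, with equality when $K\supseteq 2^{j+1}B_2^n$, so $\mathrm{Unif}(K_j)$ can be only $2^n$-warm for $\mathrm{Unif}(K_{j+1})$. The Gaussian version fares no better. Your restrictions look like $N(0,\sigma^2 I)$ and $N(0,4\sigma^2 I)$ when $K$ is large, and these have density ratio up to $2^n$ and $\chi^2$-divergence $(16/7)^{n/2}-1$; the ball and Gaussian schedules are not equivalent in any case. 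For a variance ratio $1+c$ the $\chi^2$-warmness is about $e^{nc^2/2}$. So in the worst case (e.g.\ $K$ a large ball) constant warmness needs $c=O(1/\sqrt n)$, i.e.\ roughly $\sqrt n\log(R/r)$ Gaussian phases, or $n\log(R/r)$ phases with balls. Paying for those phases, and for mixing in the intermediate distributions before they are re-rounded, is the real content of a rounding algorithm. Counting $O(\log(R/r))$ warm-started phases assumes it away.

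The second gap is the mixing premise, which fails in exactly the regime your invariant produces. The box $[0,\sqrt{12n}]\times[-\sqrt3,\sqrt3]^{n-1}$ has covariance $\mathrm{diag}(n,1,\dots,1)$, so it is isotropic in all but one direction and $\mathrm{tr}(\Sigma)=O(n)$. Its cube-like cross-section nonetheless forces ball-walk steps of length $O(1/\sqrt n)$, since longer steps are rejected exponentially often (for In-and-Out, $h=O(1/n^2)$). The first coordinate therefore moves by $O(1/n)$ per step and needs $\Omega(n^3)$ steps to spread over length $\sqrt n$. Mixing bounds for these walks scale with $\|\Sigma\|_{\mathrm{op}}$, as in $\tilde O(n^2\psi^2\|\Sigma\|_{\mathrm{op}})$, not with $\mathrm{tr}(\Sigma)/n$. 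So being $O(1)$-isotropic except in a few directions does not give $\tilde O(n^2)$ per sample.

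The proposed saving also cannot work. $\tilde O(\sqrt n)$ samples cannot even detect a constant-factor growth of the variance in one unknown direction, since spiked-covariance detection needs $\Omega(n)$ samples. Nor need the change between phases be low-rank: if $K$ is a large ball, every direction doubles.

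Finally, the accounting does not close. $\tilde O(n^{2.5})$ per phase over $O(\log(R/r))$ phases is $\tilde O(n^{2.5}\log(R/r))$, not $\tilde O(n^{3.5}\log(R/r))$. Under your own assumptions, the naive count of $\tilde O(n)$ samples at $\tilde O(n^2)$ each would already give $\tilde O(n^3\log(R/r))$. So the step you single out as the main obstacle is not where the difficulty lies. It lies in the warm-start and intermediate-mixing issues above, which your sketch leaves open.
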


The above is the complexity of the first random point. Subsequent points need only $\tilde{O}(n^2)$ queries per sample.  
We also note that the output guarantee has been strengthened from TV-distance to a pointwise guarantee, i.e., the density of the output distribution is at most $(1+\varepsilon)$ times the uniform density in the convex body~\cite{KV25sampling}.
\section{Learning Intersections of Halfspaces}

In this section, we will prove \Cref{thm:main-hard-margin} and \Cref{thm:main-full}. As discussed in the technical overview, our main tool to do so will be a procedure to find a halfspace $h$ such that the region below the halfspace, $h(x) = -1$, contains many more negative points than positive points. Applying an appropriate boosting argument then yields \Cref{thm:main-hard-margin} (standard boosting) and \Cref{thm:main-full} (boosting by covering).

\subsection{A Weak Learning Algorithm}

We begin by describing the \findh{} procedure, depicted in \Cref{alg:weak-learn} (illustrated in \Cref{fig:margin_and_algo}). 

{\small 
\begin{algorithm}[h]
\addtolength\linewidth{-2em}
\vspace{0.5em}

\textbf{Input:} Sample access to $\mathcal{D}$ with corresponding labels under $f(x) = \bigwedge_{i=1}^k \sign(w_i \cdot x)$, an accuracy parameter $\eps \in (0,1/2)$, a margin parameter $\rho \in (0,1/2)$ \\
\textbf{Output:} A hypothesis $h(x) = \sgn(w \cdot x)$. \\[0.5em]

\findh($\calD, f, \eps, \rho$):\\

\begin{enumerate}
	\item Set $M_{-} = \sqrt{n \log(9/\rho)/ \log(2k\eps^{-1})}$. 
Sample a set $\bN$ of $M_{-}$ examples from $\calD_{-}.$
	\item Set $M_{+} = \left(200 k \cdot \frac{n^2M_{-}}{\eps^4}\right)^2 \cdot 2^{\sqrt{n \log(9/\rho) \log(2k\eps^{-1})}}$.	
Sample a set $\bP$ of $M_+$ examples from $\dpos$. 
	\item Set $\calH$ to be the set of normal vectors of halfspaces that correctly classify both $\bP$ and $\bN$:
	\[\calH := \{w \in \mathbb{R}^{n+1}: \|w\|_2 \leq 1 \, \land \, w \cdot (x,1) \geq 0 \quad \forall x \in \bP \, \land \, w \cdot (x,1) \leq 0 \quad \forall x \in \bN\}.\]
	\item Sample $\bw$ uniformly at random from $\calH$ using \Cref{thm:sampling} (with total variation distance $0.01$, failure probability $0.01$, $r = 0.01\rho$, and $R = 1$)
	\item Return $h(x) = \sgn(\bw \cdot (x,1))$
\end{enumerate}
\caption{An Algorithm to Find a Good Halfspace}
\label{alg:weak-learn}
\end{algorithm}
}

The key property we will need from this algorithm is captured in the next lemma.
\begin{lemma}
\label{lem:weak-learn}
Suppose that $\min_{b \in \{\pm 1\}} \Pr_{\bx \sim \calD}[f(\bx) = b] \geq \eps/2$,
 \[\nmarg(\rho,\calD,f) \leq \Pr_{\bx \sim \calD}[f(\bx) = -1] - \eps/2,\]
 \sloppy then with probability at least $\Omega \left( 2^{-\sqrt{n \log(9/\rho) \log(2k\eps^{-1})}} \right)$ the hypothesis $h(x)$ output by $\findh(f,\calD, \eps, \rho)$ satisfies

$(i)$ $\Pr_{\bx \sim \calD} \left[ h(\bx) = -1 \right] \geq \exp \left( -2\sqrt{n \log(9/\rho) \log(2k\eps^{-1})} \right) \cdot \poly(\eps,k^{-1})$ 

$(ii)$ $\Pr_{\bx \sim \calD} \left[ f(\bx) = -1 \big |  h(\bx) = -1 \right] \geq 1 - \eps$
\end{lemma}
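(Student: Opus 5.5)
Our plan is to follow the technical overview, using the soft margin in place of the hard margin. Throughout, write $L := \sqrt{n\log(9/\rho)\log(2k\eps^{-1})}$, so that $M_- = L/\log(2k\eps^{-1})$.

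\textbf{Step 1: fix a good facet.} Let $\calD_-^{\rho}$ denote $\dneg$ conditioned on $\min_i w_i\cdot \bx < -\rho$. The hypothesis $\nmarg \le \Pr[f=-1]-\eps/2$ gives $\Pr_{\calD}[\min_i w_i\cdot\bx<-\rho]\ge \eps/2$. By pigeonhole there is an index, say $i=1$, with $\Pr_{\bx\sim\calD}[w_1\cdot \bx<-\rho]\ge \eps/(2k)$. Since $\Pr[f=-1]\le 1$, this gives $\Pr_{\dneg}[w_1\cdot\bx<-\rho]\ge \eps/(2k)$.

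Let $E$ be the event that every point of $\bN$ satisfies $w_1\cdot x<-\rho$. Then
\[\Pr[E]\ge (\eps/2k)^{M_-} = 2^{-M_-\log(2k/\eps)} = 2^{-L}.\]
This is exactly the success probability claimed in the lemma. We condition on $E$ for the rest of the argument.

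\textbf{Step 2: $\calH$ contains a large ball.} On $E$, the vector $(w_1,0)$, shrunk slightly, together with a ball of radius about $\rho/3$ around it, lies in $\calH$. The negative constraints hold because $w_1\cdot x<-\rho$ and $\|x\|=1$, so perturbing $w_1$ by $\rho/3$ keeps $w\cdot(x,1)\le 0$.

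The positive points are the delicate part. The margin definition does not force $w_1\cdot x\ge\rho$ on $\bP$, only $w_1\cdot x\ge 0$. I would handle this with the extra coordinate of $\calH$. Take the center $c=(w_1,\rho/2)/\|\cdot\|$. Then:
\begin{itemize}
\item for $x\in\bP$, $c\cdot(x,1)\ge\rho/2$ up to normalization;
\item for $x\in\bN$, $c\cdot(x,1)\le-\rho/2$.
\end{itemize}
Hence $c+(\rho/9)B_2^{n+1}\subseteq\calH$ after a constant rescaling. This gives $|\calH|\ge(\rho/9)^{n+1}|B_2^{n+1}|$, which explains the $9/\rho$ in the parameters. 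The same ball provides the inner radius $r$ required by \Cref{thm:sampling}.

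\textbf{Step 3: bad vectors have small volume.} Call $w$ \emph{bad} if $\Pr_{\calD}[w\cdot(\bx,1)<0]<\tau$ or $\Pr_{\calD}[f=-1\mid w\cdot(\bx,1)<0]<1-\eps$, where $\tau\approx \eps^{-1}\cdot n\log M_+/M_+$. Properties (i) and (ii) are meant to be read off from the complementary notion.

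\emph{Errors on positives.} By \Cref{lem:hitting-set} applied to $\dpos$ and halfspaces in $\R^{n+1}$, with probability $1-2^{-n}$ every $w\in\calH$ satisfies $\Pr_{\dpos}[w\cdot(\bx,1)<0]\le \alpha := O(n\log M_+/M_+)$.

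Consequently, any $w\in\calH$ with $\Pr_{\calD}[w<0]\ge\tau$ has positive-error mass at most $\alpha \le \eps\tau$, and so satisfies (ii). Every bad $w\in\calH$ therefore has $\Pr_{\calD}[w<0]<\tau$. Dividing by $\Pr[w_1\cdot\bx<-\rho]\ge\eps/2k$ gives
\[\Pr\bigl[w<0 \,\big|\, w_1\cdot\bx<-\rho\bigr]\le 2k\tau/\eps.\]

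\emph{Survival bound.} Now average over $\bN$ conditioned on $E$, where each point is an independent draw from $\calD$ conditioned on $w_1\cdot x<-\rho$. Fix a bad $w$ in the positive-consistent region $\calH_+$; this region depends only on $\bP$, which is independent of $\bN$. Such a $w$ survives all of $\bN$ with probability at most $(2k\tau/\eps)^{M_-}$. Integrating over $\calH_+\subseteq B_2^{n+1}$,
\[\mathbb{E}\bigl[|\text{bad}\cap\calH|\bigr]\le |B_2^{n+1}|\,(2k\tau/\eps)^{M_-}.\]

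\emph{Parameter check.} The choice of $M_+$ gives $2k\tau/\eps\le 2^{-L}\cdot(\text{small})$, and therefore
\[(2k\tau/\eps)^{M_-}\le 2^{-LM_-} = 2^{-n\log(9/\rho)}\cdot(\text{small}) \ll (\rho/9)^{n+1}/100.\]
By Markov, with constant probability the bad volume is at most $1/10$ of $|\calH|$. The sampler is within TV distance $0.01$ of uniform and fails with probability $0.01$. So with probability $\Omega(1)$, conditioned on $E$, the output $\bw$ is good, and
\[\Pr_{\calD}[h=-1]\ge\tau\ge 2^{-2L}\poly(\eps,k^{-1}).\]

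\textbf{Main obstacle.} I expect the hardest step to be Step 2, together with the extra coordinate in general. The definition of $\calH$ includes a bias coordinate even though the halfspaces are already homogenized, and the hard-margin guarantee only controls the negatives. Getting an honest ball of radius $\Theta(\rho)$ requires choosing the shift correctly. Beyond that, the $\log M_+$ factors and the exact exponent $(\rho/9)^{n+1}$ versus $2^{-LM_-}$ need care, likely absorbed by the $\poly(k,n,\eps^{-1})^2$ prefactor in $M_+$.
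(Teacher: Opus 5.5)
Your proposal is correct and is essentially the paper's proof. Both arguments condition on all of $\bN$ landing in $\{w_1\cdot x<-\rho\}$, which happens with probability $(\eps/2k)^{M_-}=2^{-L}$. Both use the bias coordinate to fit a $\Theta(\rho)$-ball into $\calH$; the paper uses the cylinder $\left(w_1+\frac{\rho}{3}B_2^n\right)\times\left[\frac{\rho}{3},\frac{2\rho}{3}\right]$ scaled by $1/(1+\rho)$. Both then apply the hitting-set lemma on $\bP$ and an averaging bound on the bad volume.

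The differences are minor.
\begin{enumerate}
\item You bound the expected bad volume in one Fubini step over all of $\bN$, as in the paper's technical overview. The paper instead iterates \Cref{lem:vol-decay} with Markov and a union bound over the $M_-$ samples.
\item You define badness through the unconditional mass $\Pr_{\calD}[w\cdot(\bx,1)<0]$. This avoids the paper's use of $\Pr[f(\bx)=-1]\ge\eps/2$ in \Cref{lem:good-halfspaces}.
\item The $(\rho/9)^{n+1}$ versus $(\rho/9)^n$ bookkeeping you flag is treated just as loosely in the paper.
\end{enumerate}
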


This immediately yields a weak learning algorithm as we know that $f(x)$ is biased towards $-1$ when $h(x) = -1$. Using a boosting algorithm then  gives a proof of \Cref{thm:main-hard-margin}. We note that proving \Cref{thm:main-full} will require more work, as we need a proper hypothesis and must be careful not to grow the set of small margin points.

\begin{proof}[Proof of \Cref{thm:main-hard-margin}]
If $f$ is noticeably biased under $\calD$, i.e. $\min_{b \in \{\pm 1\}} \Pr_{\bx \sim \calD}[f(\bx) = b] \leq 1/3$, then we can simply output a constant function to get a weak learner. Otherwise, by \Cref{lem:weak-learn}, $\findh(\calD, f, 0.01, \rho)$ outputs a hypothesis with advantage
		\[\exp \left( -6\sqrt{n \log(9/\rho) \log(k)} \right) \cdot \poly(k^{-1})\]
	with probability $p := 2^{-3\sqrt{n \log(9/\rho) \log(k)}}$. Thus, we can run \findh{} $np^{-1}$ times and check whether each hypothesis satisfies $(i)$ and $(ii)$ in \Cref{lem:weak-learn} to get a weak learner. We can then apply a boosting argument to finish the proof.
	
For the runtime bound, we simply note that \findh{} runs in time 
\[\exp \left(2\sqrt{n \log(9/\rho) \log(100k)}\right) \cdot \poly(k).\]
\Cref{thm:boosting} then yields the desired bound on the runtime after boosting.
\end{proof}

\subsection{Proof of \Cref{lem:weak-learn}}
We now turn to proving \Cref{lem:weak-learn}. To do so, it will be convenient to define 
\[\calH_+ := \{w \in \mathbb{R}^{n+1}: \|w\|_2 \leq 1 \land  w \cdot (x,1) \geq 0 \quad \forall x \in \bP\}.
\]
We can then note that
\[
\calH :=  \{w \in \calH_+: \|w\|_2 \leq 1 \land w \cdot (x,1) \leq 0 \quad \forall x \in \bN\}.
\]

We will now partition $\calH_+$ into ``good'' and ``bad'' regions. More generally, for any convex body $\calK \subseteq \mathbb{R}^{n+1}$, we denote
	\[\calK^{\good} = \left \{w \in \calK: \Pr_{\bx \sim \calD_-} [ w \cdot (\bx,1) < 0] \geq  \frac{100 n}{\eps^2}  \cdot \frac{\log(M_+)}{M_+}  \right\} \]
	and
	\[\calK^{\bad} := \calK \setminus \calK^{\good}.\]

The key fact will then be that any halfspace in $\hgood$ will satisfy the guarantees of \Cref{lem:weak-learn}.

\begin{lemma}
\label{lem:good-halfspaces}
Suppose that $\min_{b \in \{\pm 1\}} \Pr_{\bx \sim \calD}[f(\bx) = b] \geq \eps/2$ and
 \[\nmarg(\rho, \calD,f) \leq \Pr_{\bx \sim \calD}[f(\bx) = -1] - \eps/2,\]
 then with high probability over $\bP$ we have that any $w \in \hgood$ satisfies properties $(i)$ and $(ii)$ of \Cref{lem:weak-learn}. 
\end{lemma}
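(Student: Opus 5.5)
We apply \Cref{lem:hitting-set} to $\dpos$ and then do a short calculation on conditional probabilities; the margin hypothesis is not needed for this step.

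\textbf{Step 1 (VC bound on positives).} Apply \Cref{lem:hitting-set} to the distribution $\dpos$. Take the class of (homogeneous, in the lifted coordinates $(x,1)$) halfspaces on $\mathbb{R}^{n+1}$ together with the constant $+1$ function; its VC dimension is at most $d = n+2$. Choose $\eps' = \frac{c\, n \log M_+}{M_+}$ with $c$ a suitable constant such as $64$, so that $\frac{32}{\eps'} d\log(1/\eps') \le M_+$. Then with probability at least $1-2^{-(n+2)}$ over $\bP$, every halfspace that labels $\bP$ positively (every $w\in\calH_+$) satisfies
\[\Pr_{\bx\sim\dpos}[w\cdot(\bx,1)<0] \le \eps' = O\!\left(\frac{n\log M_+}{M_+}\right).\]
The threshold in $\hgood$ is $\frac{100n}{\eps^2}\cdot\frac{\log M_+}{M_+}$, which exceeds $\eps'$ by a factor of order $\eps^{-2}$. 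All remaining steps are conditioned on this event.

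\textbf{Step 2 (property (i)).} Take any $w\in\hgood$ and let $h(x)=\sgn(w\cdot(x,1))$. Then
\[\Pr_\calD[h=-1]\ge \Pr[f=-1]\cdot \Pr_{\dneg}[w\cdot(\bx,1)<0] \ge \frac{\eps}{2}\cdot\frac{100n\log M_+}{\eps^2 M_+}.\]
Now substitute the value of $M_+$: $1/M_+ = \poly(\eps,k^{-1},n^{-1},M_-^{-1})\cdot 2^{-\sqrt{n\log(9/\rho)\log(2k\eps^{-1})}}$. The extra $n$ and $M_-$ factors cancel or are absorbed, since $n/M_+$ is the relevant ratio and $M_-\le\sqrt{n\log(9/\rho)}$ is lower order than the exponential. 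This gives the bound $\exp(-2\sqrt{\cdot})\cdot\poly(\eps,k^{-1})$, with slack coming from the factor $2$ in the exponent.

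\textbf{Step 3 (property (ii)).} The false-negative mass is
\[\Pr[f=1\wedge h=-1]\le \Pr_{\dpos}[h=-1]\le \eps'.\]
The true-negative mass is
\[\Pr[f=-1\wedge h=-1]\ge \frac{\eps}{2}\cdot \frac{100n\log M_+}{\eps^2M_+} = \frac{50 n\log M_+}{\eps M_+}.\]
Hence
\[\Pr[f=1\mid h=-1]\le \frac{\eps'}{\eps'+50n\log M_+/(\eps M_+)} \le \frac{c\,\eps}{50}\le\eps\]
for $c\le 50$. This forces a consistent choice of constant between Step 1 and Step 3. If the hitting-set constant $32$ makes $c$ too large, I would absorb it using the additional $\eps^{-1}$ slack in the threshold: when $\eps<1/2$, the ratio is actually $\le c\eps^2/(50\eps)$, so it suffices to raise the threshold's constant.

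\textbf{Main obstacle.} The expected difficulty is only the bookkeeping of constants and logarithms: checking that $\frac{32}{\eps'}d\log(1/\eps')\le M_+$ holds with $\log(1/\eps')\le\log M_+$, and that the polynomial prefactors in $M_+$ yield exactly the stated form of (i).
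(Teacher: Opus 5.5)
Your argument is the paper's proof: the hitting-set lemma on $\dpos$ with $\eps'$ of order $n\log M_+/M_+$, property (i) directly from the definition of $\hgood$ (you make explicit the factor $\Pr[f=-1]\ge\eps/2$ that the paper leaves implicit), and property (ii) by comparing the false-negative mass with $\Pr[f=-1]$ times the $\hgood$ threshold. The one correction is in the constant bookkeeping: there is no extra $\eps^{-1}$ slack (your ratio $c\eps^2/(50\eps)$ is just $c\eps/50$, because one factor of $\eps$ is spent on $\Pr[f=-1]\ge\eps/2$), so rather than taking $c=64$ or changing the threshold in the definition of $\hgood$, take $c=50$ as the paper does, which satisfies the hitting-set requirement because $\log(1/\eps')\le\log M_+$ and $\frac{32(n+2)}{50n}\le 0.96$ for $n\ge 4$.
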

  
\begin{proof}
We start by observing that by definition any halfspace $w \in \hgood$ satisfies
	\[\Pr_{\bx \sim \dneg} \left[ \sgn(w \cdot (\bx,1))< 0 \right] \geq \frac{100n}{\eps^2} M_+^{-1} \log(M_+) \geq 2^{-2\sqrt{n \log(9/\rho) \log(2k\eps^{-1})}} \cdot \poly(\eps, k^{-1})\]
by our choice of $M_+$, yielding property $(i)$.

To prove property $(ii)$, we note that with high probability over $\bP$ by \Cref{lem:hitting-set} for all $w \in \hgood$
\[\Pr_{\bx \sim \dpos} \left[ w \cdot (\bx,1)  < 0 \right] \leq \frac{50 n \log(M_+)}{M_+} \]
where we used the fact that halfspaces over $\R^{n+1}$ have VC dimension $n+2$ and 
\[ \frac{32 M_+}{50 n \log(M_+)} \left( (n+2) \log \frac{M_+}{50n \log(M_+)} \right) \leq M_+.\]

We can then compute that for $h(x) = \sgn(w \cdot (x,1))$ with $w \in \hgood$
\begin{align*}
	\Pr_{\bx \sim \calD} \left[f(\bx) = 1 | h(\bx) = -1 \right] &= \Pr_{\bx \sim \calD} \left[ h(\bx) = -1 | f(\bx) = 1 \right] \cdot \frac{\Pr_{\bx \sim \calD}[f(\bx) = 1]}{\Pr_{\bx \sim \calD}[h(\bx) = -1]} \\
	&\leq \Pr_{\bx \sim \calD} \left[ h(\bx) = -1 | f(\bx) = 1 \right] \cdot \frac{\Pr_{\bx \sim \calD}[f(\bx) = 1]}{(100 n/\eps^2 \cdot M_+^{-1} \log(M_+)) \Pr_{\bx \sim \calD}[f(\bx) = -1]} \\
	&\leq \frac{M_+ \cdot \eps}{50 n \log(M_+)} \Pr_{\bx \sim \dpos} \left[ h(\bx) = -1 \right] \\
	&\leq \eps 
\end{align*}
where the first inequality uses the definition of $\hgood$ and the second inequality follows from the fact that $\min_{b \in \{\pm 1\}} \Pr_{\bx \sim \calD}[f(\bx) = b] \geq \eps/2$.
\end{proof}

Now that any $w \in \hgood$ suffices, our goal will be to show that the output satisfies $\bw \in \hgood$ with non-trivial probability. Towards this, we will show that mandating that the points in $\bN$ are labelled negatively will shrink the volume of the bad hypotheses significantly. On the other hand, we can note that volume of $\calH$ is relatively large when all the samples in $\bN$ are negatively labelled by some halfspace in the target function. Note that such an assumption on $\bN$ is vital as otherwise it is not even clear a priori that $\calH$ is nonempty.

\begin{lemma}
\label{lem:vol-lb}
Let $f(x) = \bigwedge_{i=1}^k \sign(w_i \cdot x)$, if $w_1 \cdot x < -\rho$ for all $x \in \bN$, then
	\[|\calH| \geq \left(\frac{\rho}{3(1+\rho)} \right)^{n+1} |B_2^n|\]
where $B_2^n = \{w: \|w\|_2 \leq 1\}$ denotes the unit ball in $\R^n$. Moreover, $\calH$ contains a ball of radius $\frac{\rho}{6(1 + \rho)}$.
\end{lemma}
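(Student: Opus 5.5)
The plan is to exhibit one explicit ball inside $\calH$, built from the target normal $w_1$, and then read the volume bound off that ball. The one subtlety is that $\calH$ lives in $\R^{n+1}$ and acts on lifted points $(x,1)$, and positive points carry no margin with respect to $w_1$: we only know $w_1\cdot x\ge 0$ when $f(x)=1$. I would create slack on both sides by using the extra coordinate as a threshold shift. Set
\[
c := \alpha\,(w_1, t), \qquad t := \rho/2, \qquad \alpha := \tfrac{1}{1+\rho}.
\]
For $x\in\bP$ we have $f(x)=1$, hence $w_1\cdot x\ge 0$, and so $c\cdot(x,1)=\alpha(w_1\cdot x+\rho/2)\ge \alpha\rho/2$. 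For $x\in\bN$ the hypothesis $w_1\cdot x<-\rho$ gives $c\cdot(x,1)\le -\alpha\rho/2$. Thus $c$ classifies every sample with slack $\alpha\rho/2$.

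Next I would perturb the center. Since $\supp(\calD)\subseteq\mathbb{S}^{n-1}$, every lifted point has $\|(x,1)\|_2=\sqrt2$. For $u\in\R^{n+1}$ with $\|u\|_2\le r$, Cauchy--Schwarz gives $|u\cdot(x,1)|\le \sqrt2\, r$. So $c+u$ still satisfies all sign constraints of $\calH$ whenever $\sqrt2\, r\le \alpha\rho/2$. The remaining constraint is the norm bound, which requires
\[
\alpha\sqrt{1+\rho^2/4}+r\le 1.
\]
With $r=\frac{\rho}{6(1+\rho)}$ both conditions hold. First, $\sqrt2/6<1/2$. Second, $\sqrt{1+\rho^2/4}+\rho/6\le 1+\rho^2/8+\rho/6\le 1+\rho$ for $\rho<1/2$. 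Hence $c+rB_2^{n+1}\subseteq\calH$, which is the ``moreover'' claim.

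For the volume bound, the ball alone gives $|\calH|\ge r^{n+1}|B_2^{n+1}|$. This is where I expect the main obstacle. We need to compare against $\left(\frac{\rho}{3(1+\rho)}\right)^{n+1}|B_2^n|$, which has a radius twice as large but is measured against an $n$-dimensional ball. The factor $2^{-(n+1)}$ lost in the radius cannot be recovered from $|B_2^{n+1}|/|B_2^n|=\Theta(n^{-1/2})$. So the ball argument has to be supplemented rather than used as is.

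My first attempt would be to exploit the fact that only the $\sqrt2$-inflated perturbation matters and the constraints are very anisotropic. Concretely, I would look for a product body $c+\big(r_1 B_2^n\times[-r_2,r_2]\big)\subseteq\calH$, or more simply a box-like region. Shrinking the threshold coordinate $t$ should free up room in the first $n$ coordinates. Its volume is $2r_2\,r_1^n|B_2^n|$, and the goal is to optimize $t,\alpha,r_1,r_2$ to reach $(\rho/(3(1+\rho)))^{n+1}$. If the constants still fall short, I would note that downstream only a bound of the form $(c\rho)^{n+1}|B_2^n|$ for an absolute constant $c$ is used. In that case I would prove the lemma with whatever constant the product-body construction yields and adjust the constant $9$ in $M_-$ and $M_+$ accordingly.
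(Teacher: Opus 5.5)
\textbf{The volume bound is not proved.} Your ``moreover'' argument is correct. Your ball, with center $\frac{1}{1+\rho}(w_1,\rho/2)$ and radius $\frac{\rho}{6(1+\rho)}$, is exactly the inscribed ball of the body the paper uses. But the main claim, the volume bound, is left open. You show your ball is too small, sketch a product body without checking it, and fall back on changing the lemma's constant. That does not prove the lemma as stated, and it is unnecessary.

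\textbf{Missing step: avoid the $\sqrt{2}$ loss.} This loss comes from applying Cauchy--Schwarz to the lifted point $(x,1)$. Write a perturbation as $u=(u',s)$ with $u'\in\R^n$. Since $\|x\|_2=1$, you get $|u\cdot(x,1)|\le\|u'\|_2+|s|$. So the slack $\rho/2$ at the unscaled center $(w_1,\rho/2)$ can be split: $\rho/3$ for the first $n$ coordinates and $\rho/6$ for the threshold coordinate. This is the paper's proof. Every $w=(w',\theta)\in\left(w_1+\frac{\rho}{3}B_2^n\right)\times\left[\frac{\rho}{3},\frac{2\rho}{3}\right]$ satisfies:
\begin{itemize}
\item $w'\cdot x+\theta\ge w_1\cdot x-\frac{\rho}{3}+\frac{\rho}{3}\ge 0$ for $x\in\bP$;
\item $w'\cdot x+\theta\le w_1\cdot x+\frac{\rho}{3}+\frac{2\rho}{3}<0$ for $x\in\bN$;
\item $\|w\|_2\le 1+\rho$.
\end{itemize}
The sign constraints are homogeneous, so dividing by $1+\rho$ puts this cylinder inside $\calH$. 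Its volume $\left(\frac{\rho}{3}\right)^{n}|B_2^n|\cdot\frac{\rho}{3}$ becomes exactly $\left(\frac{\rho}{3(1+\rho)}\right)^{n+1}|B_2^n|$, so no constant downstream needs to change.

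\textbf{The ball route also works with a larger radius.} Your conclusion that a ball cannot work comes from the radius you chose, not from the geometry.
\begin{itemize}
\item At your center, the sign constraints only require $\sqrt{2}\,r\le\frac{\rho}{2(1+\rho)}$.
\item The norm constraint still holds at $r^*=\frac{\rho}{2\sqrt{2}(1+\rho)}$, because $\sqrt{1+\rho^2/4}+\frac{\rho}{2\sqrt{2}}\le 1+\rho$.
\item The ball of radius $r^*$ has volume $(r^*)^{n+1}|B_2^{n+1}|$. Its ratio to the target is $\left(\frac{3}{2\sqrt{2}}\right)^{n+1}|B_2^{n+1}|/|B_2^n|$.
\item Gautschi's inequality gives $|B_2^{n+1}|/|B_2^n|>\sqrt{2\pi/(n+3)}$.
\item One checks $\left(\frac{3}{2\sqrt{2}}\right)^{n+1}\sqrt{2\pi/(n+3)}>1.2$ for every $n\ge 1$.
\end{itemize}
So the largest admissible ball also gives the stated constant.
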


\begin{proof}
We claim that
	\[\left( w_1 + \frac{\rho}{3} B_2^n \right) \times \left [\frac{\rho}{3} , \frac{2\rho}{3} \right] \subseteq (1 + \rho) \cdot \calH \]
Since the volume of this cylinder is $\left(\frac{\rho}{3} \right)^{n+1} |B_2^n|$ and it contains a ball of radius $\frac{\rho}{6}$, the lemma will immediately follow. To prove the claim, note that if $w \in \left( w_1 + \frac{\rho}{3} B_2^n \right) \times \left [\frac{\rho}{3} , \frac{2\rho}{3} \right]$, then by the triangle inequality
	\[\|w\| \leq \|w_1\| + \frac{\rho}{3} + \frac{2\rho}{3} = 1 + \rho\]
We'll now let $w = (w', \theta)$ where $w' \in \R^n$ denotes the first $n$ coordinates of $w$ and $\theta \in \R$ is the $(n+1)$st coordinate. We can then compute that for any $x \in \bP$,
	\[w \cdot (x,1) = w' \cdot x + \theta \geq w' \cdot x + \frac{\rho}{3} \geq w_1 \cdot x - \frac{\rho}{3} + \frac{\rho}{3} > 0.\]
Now as $w_1 \cdot x < -\rho$ for all $x \in \bN$, we have that if $x \in \bN$ then
	\[w \cdot (x,1) = w' \cdot x + \theta \leq w' \cdot x + \frac{2 \rho}{3} \leq w_1 \cdot x + \frac{\rho}{3} + \frac{2\rho}{3} < 0.\]
Thus, $w \in (1 + \rho) \cdot \calH$ as desired.
\end{proof}

We now turn to show that the volume of the bad points in $\calH_+$ will be small, via a simple averaging argument. 

\begin{lemma}
\label{lem:vol-decay}
Let $\calK \subseteq \mathbb{R}^{n+1}$ be a bounded, convex body and for $x \in \R^n$ let
	\[\calK_x = \calK \cap \{ w: w \cdot (x,1) < 0 \}.\]
If $\calD^{(1)}_{-}$ denotes $\dneg$ conditioned on $w_1 \cdot \bx < -\rho$, then
	\[\Pr_{\bx \sim \dneg}[w_1 \cdot \bx < -\rho] \cdot \E_{\bx \sim \calD^{(1)}_{-}} \left[ \left|\calK_{\bx}^{\bad} \right| \right] \leq \frac{100 n}{\eps^2} \cdot M_+^{-1} \log(M_+) \cdot \left|\calK^{\bad} \right| \]
\end{lemma}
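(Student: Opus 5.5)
Write the expectation as an integral over $\calK^{\bad}$ and swap the order of integration (Fubini/Tonelli). First note that $\calK_x^{\bad} = \calK^{\bad} \cap \{w : w\cdot(x,1) < 0\}$. The good/bad classification of $w$ depends only on $w$, through $\Pr_{\dneg}[w\cdot(\bx,1)<0]$, and not on $x$. Hence
\[
\E_{\bx \sim \calD^{(1)}_{-}} \left[ \left|\calK_{\bx}^{\bad}\right| \right] = \int_{\calK^{\bad}} \Pr_{\bx \sim \calD^{(1)}_-}\left[ w \cdot (\bx,1) < 0\right] \, dw .
\]
Tonelli applies because the integrand $\mathbf{1}[w \in \calK^{\bad}]\mathbf{1}[w\cdot(x,1)<0]$ is a nonnegative measurable function of $(w,x)$. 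Boundedness of $\calK$ makes all volumes finite.

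Next, multiply by $p_1 := \Pr_{\bx\sim\dneg}[w_1\cdot \bx < -\rho]$ and use the definition of conditioning. For each fixed $w$,
\[
p_1 \cdot \Pr_{\bx\sim\calD^{(1)}_-}[w\cdot(\bx,1)<0] = \Pr_{\bx\sim\dneg}\left[w\cdot(\bx,1)<0 \,\land\, w_1\cdot\bx<-\rho\right] \le \Pr_{\bx\sim\dneg}[w\cdot(\bx,1)<0].
\]
Since $w\in\calK^{\bad}$, the right-hand side is less than $\frac{100n}{\eps^2}M_+^{-1}\log(M_+)$ by the definition of $\calK^{\bad}$ as the complement of $\calK^{\good}$. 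Integrating this pointwise bound over $\calK^{\bad}$ gives the claimed inequality, with the factor $|\calK^{\bad}|$ appearing on the right.

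There is essentially no hard step. The only things to check are measure-theoretic. First, $\calK^{\bad}$ must be measurable; this holds because $w \mapsto \Pr_{\dneg}[w\cdot(\bx,1)<0]$ is a measurable function, being lower semicontinuous by Fatou's lemma, since each indicator of the open halfspace condition is lower semicontinuous in $w$. Second, we need $p_1>0$ so that $\calD^{(1)}_-$ is defined. If $p_1=0$, the left-hand side is interpreted as $0$ and the inequality is trivial. I would state this degenerate case explicitly at the start of the proof.
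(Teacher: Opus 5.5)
Your proof is correct and is essentially the paper's argument. The paper first bounds $\Pr_{\bx\sim\dneg}[w_1\cdot\bx<-\rho]\cdot\E_{\bx\sim\calD^{(1)}_-}[|\calK^{\bad}_{\bx}|]$ by $\E_{\bx\sim\dneg}[|\calK^{\bad}_{\bx}|]$, then swaps the order of integration and applies the definition of $\calK^{\bad}$; you swap first and drop the event $w_1\cdot\bx<-\rho$ pointwise afterwards, which is the same computation in a different order (and your version avoids the paper's implicit assumption that $\dneg$ has a density, with correct measurability and degenerate-case remarks).
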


\begin{proof}
	Let $p(x)$ denote the pdf of $\dneg$ and observe
		\begin{align*}
		\Pr_{\bx \sim \dneg}[w_1 \cdot \bx < -\rho] \cdot \E_{\bx \sim \dneg^{(1)}} \left[ \left| \calK_{\bx}^\bad \right| \right] &\leq \E_{\bx \sim \dneg} \left[ \left| \calK_{\bx}^\bad \right| \right] \\
			 &= \int_{\mathbb{R}^n} p(x) \int_{\calK_{\bx}^{\bad}} 1 \;\; dw \; dx \\ 
				&= \int_{\mathbb{R}^n} p(x) \int_{\calK^{\bad}} \mathbb{I}[w \cdot (x,1) < 0] \;\; dw \; dx \\
				&= \int_{\calK^{\bad}} \int_{\mathbb{R}^n} p(x) \cdot \mathbb{I}[w \cdot (x,1) < 0] \;\; dx \; dw \\
				&= \int_{\calK^{\bad}} \Pr_{\bx \sim \dneg}[w \cdot (\bx,1) < 0] \;\; dw \\
				&\leq \frac{100 n}{\eps^2} \cdot M_+^{-1} \log(M_+) \cdot \left| \calK^{\bad} \right|
		\end{align*}
		
\end{proof}

We can now prove the main lemma.

\begin{proof}[Proof of \Cref{lem:weak-learn}]
	Let $f(x) = \bigwedge_{i=1}^k \sign(w_i \cdot x)$, $\nmarg = \nmarg(\rho,\calD,f)$ and without loss of generality suppose
		\[\Pr_{\bx \sim \dneg} \left[ w_1 \cdot \bx < -\rho\right] \geq \frac{1}{k} \cdot \left( 1 - \frac{\nmarg}{\Pr_{\bx \sim \calD}[f(\bx) = -1]} \right). \]
	We now condition on the event that all samples from $\bN$ are drawn from $\dheavy$, where $\dheavy$ is $\dneg$ conditioned on $w_1 \cdot \bx < -\rho$. Note that this happens with probability at least
	\begin{align*}
		\left( \frac{1}{k} \cdot \left( 1 - \frac{\nmarg}{\Pr_{\bx \sim \calD}[f(\bx) = -1]} \right) \right)^{M_{-}} &\geq \left( \frac{1}{k} \cdot \left( 1 - \frac{\nmarg}{\nmarg + \eps/2} \right) \right)^{M_{-}} \\
		&\geq \left( \frac{\eps}{2k} \right)^{M_-} \\
		&= 2^{-\sqrt{n \log(9\rho^{-1}) \log(2k\eps^{-1})}}
	\end{align*}

By \Cref{lem:vol-lb}, we now have that
			\[|\calH | \geq \left(\frac{\rho}{3(1+\rho)} \right)^{n+1} |B_2^n| \geq \left(\frac{\rho}{6} \right)^{n+1} |B_2^n|. \]	
	On the other hand, \Cref{lem:vol-decay} yields that for any convex body
	\begin{align*}
		\E_{\bx \sim \dheavy} [ |\calK_{\bx}^\bad| ] &\leq \left(\Pr_{\bx \sim \dneg} \left[ w_1 \cdot \bx < -\rho \right]\right)^{-1} \cdot \frac{100 n}{\eps^2} \cdot M_+^{-1} \log(M_+) \cdot |\calK^\bad| \\
		&= \left(\Pr_{\bx \sim \dneg} \left[ w_1 \cdot \bx < -\rho \right]\right)^{-1} \cdot \frac{100 n}{\eps^2} \log(M_+) \cdot \left(\frac{1}{200 k} \cdot \frac{\eps^4}{n^2M_{-}} \right)^2 \cdot 2^{-\sqrt{n \log(9/\rho) \log(2k\eps^{-1})}} \cdot |\calK^\bad| \\
		&\leq \frac{1}{nM_{-}} \cdot 2^{-\sqrt{n \log(9/\rho) \log(2k\eps^{-1})}} \cdot |\calK^\bad|
	\end{align*}
	Thus, by Markov's inequality, we have that
		\[\Pr_{\bx \sim \dheavy} \left[ |\calK_{\bx}^\bad| \geq 
2^{-\sqrt{n \log(9/\rho) \log(2k\eps^{-1})}} \cdot |\calK^\bad| \right] \leq \frac{1}{n M_-} \]
	So with high probability
		\begin{align*}
|\calH^\bad| &\leq \left( 
2^{-\sqrt{n \log(9/\rho) \log(2k\eps^{-1})}} \right)^{M_{-}} |\hgood| \\ 
&\leq \left( 2^{-\sqrt{n \log(9/\rho) \log(2k\eps^{-1})}} \right)^{M_{-}} \cdot |B_2^n| \ll \left ( \frac{\rho}{6} \right)^n |B_2^n| 
\end{align*}	
	Thus, conditioned on drawing $\bN$ from $\dheavy$, it follows that a random vector from $\calH$ satisfies $w \in \hgood$ with high probability.
	
	To complete the proof, note that since $\calH$ contains a ball of radius $\rho/12$ by \Cref{lem:vol-lb}, it follows by \Cref{thm:sampling}, that we draw a $w \in \hgood$ with probability at least $0.95$. Thus, by \Cref{lem:good-halfspaces}, $h$ satisfies properties $(i)$ and $(ii)$ with the desired success probability.
\end{proof}

\subsection{Boosting by Covering}

We now turn to describing how to get a strong learner from \findh{} when we only have a soft margin. To do so, it will be convenient to work with the following abstraction for simplicity.

\begin{definition}[Region Learner]
\label{def:region}
	Let $\calC$ denote a family of boolean functions and $\calF_f$ denote a family of distributions parameterized by $f \in \mathcal{C}$. We say that an algorithm is an $\eps$-region learner for $\calC$ with advantage $\gamma$ if for any $f \in \calC$ and any distribution $\calD \in \calF_f$, we have that the algorithm, with probability at least $0.99$, outputs a region $\calR$ such that

	$(i)$ $\Pr_{\bx \sim \calD}[\bx \not \in \calR] \geq \gamma$
	
	$(ii)$ $\Pr_{\bx \sim \calD}[f(\bx) = -1 |\bx \not \in \calR] \geq 1 - \eps$
	
\noindent using labeled examples from $\calD$ and sample access to $\dneg$ and $\dpos$.
\end{definition}

By \Cref{lem:weak-learn}, we can get a region learner for intersections of halfspaces by running \findh{} many times and checking if each output satisfies $(i)$ and $(ii)$. In this case, our set $\calF_f$ will correspond to the distributions that satisfy the assumptions of \Cref{lem:weak-learn}.

We now show how to boost a region learner to get a strong learner. In particular, we will simply repeatedly find a region $\calR$ using our region learner and then restrict our attention to points in $\calR$, as we know we can safely label points outside of $\calR$ negatively. That said, some additional care is needed to ensure that we can sample from the restricted distributions efficiently.

{\small
\begin{algorithm}[h]
\addtolength\linewidth{-2em}
\vspace{0.5em}

\textbf{Input:} Sample access to $\mathcal{D}$ with corresponding labels under $f \in \calC$, an accuracy parameter $\eps \in (0,1/2)$, $\calA$ an $\eps$-region learner with advantage $\gamma$ \\
\textbf{Output:} A hypothesis $h(x) = (x \in \calR_1) \land \dots \land (x \in \calR_t)$ \\[0.5em]

\cover($\calD, f, \calA, \eps, \gamma$):\\

\begin{enumerate}
	\item \label{line:balance} If $\Pr_{\bx \sim \calD}[f(\bx) = b] \leq 5\eps$ for some $b \in \bits$, return the constant function $h \equiv -b$
	\item $t \gets 0$, $\calD_0 \gets \calD, g_0 \equiv 1$
	\item \label{line:loop} While $t < \gamma^{-1} \log(1/\eps)$ and $\min_{b \in \bits} \Pr_{\bx \sim \calD} \left[f(\bx) = b \land \bigwedge_{i=0}^t \bg_i(\bx) = 1 \right] \geq \eps$:
	\begin{enumerate}
		\item \label{line:compute-region} Run $\calA$ on $\calD_{t}$ and $f$ and let $\calR_{t+1}$ be the region it outputs and $g_{t+1}(x)$ be the indicator of $x \in \calR_{t+1}$
		\item \label{line:ret-bad} Check if $\calR_{t+1}$ satisfies $(i)$ and $(ii)$ in \Cref{def:region} by using a fresh set of samples. If not, return $h(x) = \bigwedge_{i=0}^{t} g_i(x)$ 
		\item Set $\calD_{t+1}$ to be $\calD_{t}$ conditioned on $g_{t+1}(\bx) = 1$
		\item $t \gets t+1$
		\end{enumerate}
	\item \label{line:ret-good} Output $h(x) := \bigwedge_{i = 0}^{t} g_i(x)$
\end{enumerate}
\caption{An Algorithm to Boost a Region Learner}
\label{alg:cover-learn}
\end{algorithm}
}

Our main goal will then be to show the following guarantee. 

\begin{lemma}
\label{lem:cover-learn}
Let $\calA$ be an $\eps$-region learner with advantage $\gamma$ on a set of distributions $\calF_f$. Then, with probability at least $0.99$, $\cover(\calD, f, \calA, \eps, \gamma)$ outputs a hypothesis with error at most
	\[\begin{cases}  \Pr_{\bx \sim \calD} \left[f(\bx) = -1 \land h(\bx) = +1 \right] + \eps & \text{ $h(x)$ is returned on Line \ref{line:ret-bad}} \\ 6 \eps & \text{ otherwise} \end{cases}.\]
Moreover, if we are in the first case, we have that
	\[\min_{b \in \bits} \Pr_{\bx \sim \calD} \left[f(\bx) = b \land h(\bx) = 1 \right] \geq 2\eps/3\]
\end{lemma}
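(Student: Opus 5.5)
The plan is to decompose the error of the output $h = \bigwedge_{i=0}^t g_i$ into false negatives, $\Pr[f(\bx)=1 \land h(\bx)=-1]$, and false positives, $\Pr[f(\bx)=-1 \land h(\bx)=+1]$, and to bound the two separately. First, I will condition on the good event that every call to $\calA$ that passes the check on Line~\ref{line:ret-bad} really satisfies $(i)$ and $(ii)$ of \Cref{def:region} with respect to $\calD_t$, up to constant slack (say $(ii)$ with $2\eps'$ in place of $\eps$, where $\eps'$ is a constant fraction of $\eps$ chosen so the final constants come out). I also condition on the empirical tests of the loop condition and of Line~\ref{line:balance} being accurate to within $\eps/3$. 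There are at most $T=\gamma^{-1}\log(1/\eps)$ iterations. With $\poly(T,\eps^{-1},\gamma^{-1})$ fresh samples per test, a Chernoff bound and a union bound give all of this with probability $0.99$. Sampling from $\calD_t$ is done by rejection: draw from $\calD$ and keep the draws with $\bigwedge_{i\le t} g_i=1$. This is efficient while the loop runs, because the loop condition guarantees mass at least $\eps$ in the current region.

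\textbf{False negatives.} The regions $\{\bx\notin\calR_{i+1}\}\cap\{g_0=\dots=g_i=1\}$ for $i=0,\dots,t-1$ are disjoint. By $(ii)$, at most an $O(\eps)$ fraction of the $\calD_i$-mass of the $i$-th region is positive, so the $\calD$-mass of positives in it is at most $O(\eps)\cdot\Pr_\calD[\text{region } i]$. Summing over the disjoint regions gives $\Pr[f=1\land h=-1]\le O(\eps)\cdot\Pr[h=-1]\le \eps$, after tuning constants (I expect the region learner to be invoked with a constant fraction of $\eps$). This bound holds in every case.

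\textbf{False positives, case analysis.}
\begin{itemize}
\item If $h$ is returned on Line~\ref{line:balance}, the constant $-b$ errs exactly on $\{f=b\}$, which has mass at most $5\eps+\eps/3\le6\eps$ by the accuracy of the empirical estimate.
\item If $h$ is returned on Line~\ref{line:ret-bad}, the error is at most $\Pr[f=-1\land h=1]+\eps$ by the false-negative bound. The loop condition held for this $t$, so both classes retain mass at least $\eps-\eps/3=2\eps/3$ inside $h=1$; this gives the ``moreover'' statement.
\item If the loop exits because the min condition fails, some class $b$ has $\Pr[f=b\land h=1]<\eps+\eps/3$. If $b=-1$, the false positives are at most $\approx\eps$. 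If $b=+1$, then $h$ has essentially no positive mass left, and the error is at most false negatives plus false positives, at most $\eps+\Pr[h=1]$.
\end{itemize}

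The main obstacle is the third case, together with the iteration-cap exit ($t\ge\gamma^{-1}\log(1/\eps)$). For the cap, I will use $(i)$: each successful round removes at least a $\gamma$ fraction of the current mass, so $\Pr_\calD[h=1]\le(1-\gamma)^t\le\eps$, and the false positives are then at most $\eps$. For the $b=+1$ exit with positive mass under $h=1$ below $\eps$, note that the false positives lie inside $\{h=1\}$. Naively $\Pr[h=1]$ can be large, so I would instead argue that returning the constant $-1$ would not hurt. Since the lemma's bound is about $h$, though, I expect the intended reading is that the constants work out by charging $\Pr[f=-1\land h=1]$ via the loop exit; I would double-check this case and, if needed, adjust the stopping rule to output $-1$ on the remaining region. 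Combining the cases gives total error at most $6\eps$ in the non-Line~\ref{line:ret-bad} cases.
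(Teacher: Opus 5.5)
Your decomposition matches the paper's proof in the following parts: the disjoint-region false-negative bound $\Pr[f(\bx)=1\land h(\bx)=-1]\le\eps\Pr[h(\bx)=-1]\le\eps$, the Line~\ref{line:balance} case, the $(1-\gamma)^t\le\eps$ argument at the iteration cap, and the $2\eps/3$ bound for the ``moreover'' part. The gap is the case you yourself flag: the loop exits because the \emph{positive} mass $\Pr[f(\bx)=1\land h(\bx)=1]$ drops below the threshold. There your bound $\eps+\Pr[h(\bx)=1]$ is vacuous. Proposing to change the stopping rule does not prove the lemma, which is about \cover{} as written.

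The missing idea is that this exit cannot happen. For every $t$, the positive mass inside $\{h_t=1\}$, where $h_t=\bigwedge_{i=0}^t g_i$, stays well above the threshold. The paper isolates this as \Cref{lem:pos-mass-big}:
\begin{itemize}
\item each removed slice $\{h_t=1\}\setminus\calR_{t+1}$ is at least $(1-\eps)$-negative;
\item so the negative mass removed is at least $\frac{1-\eps}{\eps}\ge\frac{1}{2\eps}$ times the positive mass removed;
\item at most $1$ unit of negative mass can ever be removed, so at most $2\eps$ positive mass is removed;
\item since Line~\ref{line:balance} did not fire, $\Pr[f(\bx)=1]\ge 4\eps$, which leaves at least $2\eps$.
\end{itemize}
Hence a min-condition exit forces $b=-1$, and then the false positives are $O(\eps)$.

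You already had this in hand. The positive mass removed up to time $t$ is exactly $\Pr[f(\bx)=1\land h_t(\bx)=-1]$. Your false-negative argument applies verbatim to every prefix $h_t$, not just the final output, and caps this at $\eps$. With your $\eps/3$ estimation accuracy, not returning on Line~\ref{line:balance} gives $\Pr[f(\bx)=1]>14\eps/3$, so $\Pr[f(\bx)=1\land h_t(\bx)=1]>11\eps/3$ for all $t$. A positive-side exit would instead need this quantity below $4\eps/3$, and the gap is wide enough to absorb any constant-factor slack you build into $(ii)$. Stating this invariant explicitly closes your third case without modifying the algorithm. In the remaining $b=-1$ exit, the total error is below $4\eps/3+\eps\le 6\eps$.
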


Before moving to the prove the statement, we remark that the guarantee is somewhat unusual compared to standard boosting results. In particular, the error of the learner depends on whether we return on Line \ref{line:ret-bad}. Notably, this is necessary. Unlike in the standard boosting setting, our region learner may suddenly face a distribution that is not in $\calF_f$. In this scenario, we then have no guarantee on the output of the region learner and must terminate.

At a high-level, the proof will follow by bounding the false negative error rate, $\Pr_{\bx \sim \calD}[h(\bx) = -1 \land f(\bx) = 1]$, using the fact that $f$ is highly biased towards $-1$ on $\calR_i \setminus \calR_{i+1}$ (property $(ii)$ of \Cref{def:region}). We will then bound the false positive error rate, $\Pr_{\bx \sim \calD}[h(\bx) = 1 \land f(\bx) = -1]$, by taking cases on where $h(x)$ is returned in \cover{}. Towards this, we first show that the algorithm never exits the loop on Line \ref{line:loop} because of $\Pr_{\bx \sim \calD} \left[f(\bx) = 1 \land \bigwedge_{i=0}^t g_i(\bx) = 1 \right]$ is too small. 

\begin{lemma}
\label{lem:pos-mass-big}
	Suppose that the check on Line \ref{line:ret-bad} is always correct and $\min_{b \in \bits} \Pr_{\bx \sim \calD} [f(\bx) = b] \geq 4\eps$, then for all times $t$ we have
		\[\Pr_{\bx \sim \calD} \left[f(\bx) = 1 \land \bigwedge_{i=0}^t g_i(\bx) = 1 \right] \geq 2 \eps\]
\end{lemma}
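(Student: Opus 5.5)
We bound how much positive mass gets removed at each round, and then sum these losses using the fact that the negative-conditional guarantee forces the removed regions to be mostly negative. For each $t$, write $G_t(x) = \bigwedge_{i=0}^t g_i(x)$, and let $S_t = \{x : G_{t-1}(x) = 1,\ g_t(x) = -1\}$ be the region peeled off in round $t$. The sets $S_1,\dots,S_t$ are disjoint. Their union is exactly $\{x : G_t(x) = -1\}$, because $g_0 \equiv 1$. Since the check on Line~\ref{line:ret-bad} is assumed always correct, every region $\calR_t$ that the loop keeps satisfies property $(ii)$ of \Cref{def:region} with respect to $\calD_{t-1}$. Here $\calD_{t-1}$ is $\calD$ conditioned on $G_{t-1}(\bx)=1$. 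So $\Pr_{\bx\sim\calD_{t-1}}[f(\bx)=1 \mid \bx\notin \calR_t] \le \eps$, and translating back to $\calD$ gives
\[\Pr_{\bx\sim\calD}[f(\bx)=1 \land \bx\in S_t] \le \eps \cdot \Pr_{\bx\sim\calD}[\bx \in S_t].\]

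Summing over the disjoint $S_i$ yields
\[\Pr_{\bx\sim\calD}[f(\bx)=1 \land G_t(\bx)=-1] \le \eps\cdot\Pr_{\bx\sim\calD}[G_t(\bx)=-1] \le \eps.\]
Hence
\[\Pr_{\bx\sim\calD}[f(\bx)=1\land G_t(\bx)=1] \ge \Pr_{\bx\sim\calD}[f(\bx)=1]-\eps \ge 4\eps-\eps = 3\eps \ge 2\eps.\]

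Only two points need care. The first is the conditioning bookkeeping: the conditional-probability guarantee for $\calD_{t-1}$ must translate to a joint-probability bound under $\calD$ over $S_t$, which is just multiplication by $\Pr[G_{t-1}=1]$. The second is that the bound must hold for every $t$ reached, including the index at which a region fails the check. That failing region is never incorporated into $h$, so it does not count, and the argument only involves regions that passed. I expect no real obstacle here; the main subtlety is making sure the disjoint-union decomposition is stated correctly so that the per-round $\eps$-fraction bounds sum to $\eps$ times total removed mass rather than $t\eps$.
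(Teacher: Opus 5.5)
Your argument is correct and essentially matches the paper's: both apply property $(ii)$ to each peeled-off region and sum over these disjoint regions. The paper compares removed positive mass to removed negative mass (ratio $\frac{1-\eps}{\eps}\ge\frac{1}{2\eps}$) and uses that total negative mass is at most $1$, which loses $2\eps$. You instead compare to total removed mass and lose only $\eps$, giving $3\eps$; this is the same computation the paper uses for the false-negative bound in the proof of \Cref{lem:cover-learn}.
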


\begin{proof}
Denote $h_k = \bigwedge_{i=0}^k g_i(x)$ and define
	\[\delta_k^+ := \Pr_{\bx \sim \calD}[f(\bx) = 1 \land h_{k+1}(\bx) = 1] - \Pr_{\bx \sim \calD}[f(\bx) = 1 \land h_k(\bx) = 1] \]
and
	\[\delta_k^- := \Pr_{\bx \sim \calD}[f(\bx) = -1 \land h_{k+1}(\bx) = 1] - \Pr_{\bx \sim \calD}[f(\bx) = -1 \land h_k(\bx) = 1]. \]
Now observe that
	\[\delta_k^+ = -\Pr_{\bx \sim \calD}[f(\bx) = 1 \land g_{k+1}(\bx) = -1 \land h_k(\bx) = 1]\]
and 
	\[\delta_k^- = -\Pr_{\bx \sim \calD}[f(\bx) = -1 \land g_{k+1}(\bx) = -1 \land h_k(\bx) = 1].\]
By property $(ii)$ of \Cref{def:region}, we have that
\begin{align*}
	\Pr_{\bx \sim \calD_{k}} \left[ f(\bx) = -1 | \bx \not \in \calR_{k+1} \right] &\geq \frac{1 - \eps}{\eps} \Pr_{\bx \sim \calD_{k}} \left[ f(\bx) = +1 | \bx \not \in \calR_{k+1} \right] \\
	&\geq \frac{1}{2\eps} \Pr_{\bx \sim \calD_{k}} \left[ f(\bx) = +1 | \bx \not \in \calR_{k+1} \right]
\end{align*}
where we used that $\eps \leq \frac{1}{2}$. We then conclude
	\[\delta_k^- \leq \frac{1}{2\eps} \cdot \delta_k^+.\]

We can now complete the proof by noting that for any $t$
	\[-1 \leq \sum_{k = 0}^{t-1} \delta_k^- \leq \frac{1}{2\eps} \cdot \sum_{k=0}^{t-1} \delta_k^+ \]
which implies that
	\[\Pr_{\bx \sim \calD}[f(\bx) = 1 \land h_{t}(\bx) = 1] - \Pr_{\bx \sim \calD}[f(\bx) = 1 \land h_0(\bx) = 1] \geq -2\eps.\]
Using the premise that $\Pr_{\bx \sim \calD} [f(\bx) = +1] \geq 4\eps$ then finishes the proof.
\end{proof}

With this we can now prove \Cref{lem:cover-learn}.

\begin{proof}[Proof of \Cref{lem:cover-learn}]
	We will condition on the event that the check on Line \ref{line:ret-bad} is always correct, the check on Line \ref{line:balance} estimates $\min_{b \in \bits} \Pr_{\bx \sim \calD} [f(\bx) = b]$ to accuracy $\eps$, and the check on Line \ref{line:loop} estimates $\min_{b \in \bits} \Pr_{\bx \sim \calD} \left[f(\bx) = b \land \bigwedge_{i=0}^t \bg_i(x) = 1 \right]$ to accuracy $\eps/3$. Notably, assuming we draw $\poly(\eps^{-1}, \gamma^{-1})$ samples for these checks, this happens with probability at least $0.99$ over the at most $O(\gamma^{-1} \log(1/\eps))$ such checks.
	We now note that if \cover{} returns on Line \ref{line:balance}, then we clearly have advantage at least $6\eps$. As such, we'll assume Line \ref{line:balance} does not return and thus
		\[\min_{b \in \bits} \Pr_{\bx \sim \calD} [f(\bx) = b] \geq 4\eps.\]

	Letting $T$ denote the final time $t$ in \cover{} and $h_k = \bigwedge_{i=0}^k g_i(x)$, we then note
	\begin{align*}
		\Pr_{\bx \sim \calD}[f(\bx) = +1 \land h(\bx) = -1] &= \sum_{i=1}^T \Pr_{\bx \sim \calD}[f(\bx) = +1 \land g_i(\bx) = -1 \land \forall j < i ~~ g_j(\bx) = 1] \\
		&= \sum_{i=1}^T \Pr_{\bx \sim \calD}[ g_i(\bx) = -1 \land h_{i-1}(\bx) = 1] \cdot \Pr_{\bx \sim \calD_{i-1}}[f(\bx) = +1 | g_i(\bx) = -1 ] \\
		&\leq \sum_{i=1}^T \eps \Pr_{\bx \sim \calD}[ g_i(\bx) = -1 \land h_{i-1}(\bx) = 1 ] \\
		&= \eps \Pr[h(\bx) = -1] \\
		&\leq \eps
	\end{align*}
where the first inequality used property $(ii)$ in \Cref{def:region}. 

This immediately implies that the error $h(x)$ is bounded by 

\[\Pr_{\bx \sim \calD} [f(\bx) = -1 \land h(\bx) = +1] + \eps\]

\noindent when $h(x)$ is returned on Line \ref{line:ret-bad}.

Thus, it remains to show

	\begin{equation}
	\label{eq:fp-error}
		\Pr_{\bx \sim \calD} \left[f(\bx) = -1 \land h(\bx) = +1 \right] \leq \eps.
	\end{equation}
when $h(x)$ is returned on Line \ref{line:ret-good}, as adding this to the false negative error that we just computed would complete the proof of the lemma. To bound \Cref{eq:fp-error}, first suppose that we exit the loop on Line \ref{line:loop} because 
	\[\min_{b \in \bits} \Pr_{\bx \sim \calD} \left[f(\bx) = b \land \bigwedge_{i=0}^T \bg_i(\bx) = 1 \right] \leq \eps.\]
By \Cref{lem:pos-mass-big}, we must have that $b = -1$ and thus \Cref{eq:fp-error} holds as desired.

Now suppose that we exit the loop because $T \geq \gamma^{-1} \log(1/\eps)$. We then note
\begin{align*}
	\Pr_{\bx \sim \calD}[h_k(\bx) = 1] &= \Pr_{\bx \sim \calD}[h_{k-1}(\bx) = 1 \land g_k(\bx) = 1] \\
	&= \Pr_{\bx \sim \calD}[h_{k-1}(\bx) = 1] \Pr_{\bx \sim \calD_{k-1}}[g_k(\bx) = 1] \\
	&\leq (1 - \gamma) \Pr_{\bx \sim \calD}[h_{k-1}(\bx) = 1]
\end{align*}
by property $(i)$ of \Cref{def:region}.
Thus, 
	\[\Pr_{\bx \sim \calD} \left[f(\bx) = -1 \land h(\bx) = +1 \right] \leq (1 - \gamma)^T \Pr_{\bx \sim \calD} [h(\bx) = 1] \leq (1 - \gamma)^T \leq e^{-\gamma T} \leq \eps.\]

\end{proof}

We conclude by recording the following lemma capturing the runtime and sample complexity aspects of \cover{}.

\begin{lemma}
\label{lem:cover-runtime}
$\cover(\calD, f, \calA, \eps, \gamma)$ calls $\calA$ at most $O(\gamma^{-1} \log(\eps^{-1}))$ times and with probability at least $0.98$ uses at most $\poly(\eps^{-1}, \gamma^{-1}) m_\calA$ samples with high probability, where $m_\calA$ is the sample complexity of $\calA$. Additionally, the hypothesis will be of the form $(x \in \calR_1) \land \dots \land (x \in \calR_t)$, where each $\calR_i$ is either produced by the region learner, the empty set, or the entire domain.
\end{lemma}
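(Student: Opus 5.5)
The plan is to read the three claims of \Cref{lem:cover-runtime} off the structure of \cover{}; each one is bookkeeping. I treat them in order: the number of calls to $\calA$, the sample complexity, and the form of the hypothesis.

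\textbf{Number of calls.} The loop on Line \ref{line:loop} runs only while $t < \gamma^{-1}\log(1/\eps)$. Each iteration calls $\calA$ exactly once, on Line \ref{line:compute-region}, and then either returns on Line \ref{line:ret-bad} or increments $t$. Hence there are at most $\lceil \gamma^{-1}\log(1/\eps)\rceil = O(\gamma^{-1}\log(\eps^{-1}))$ calls.

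\textbf{Samples.} Three things consume samples. First, the estimate on Line \ref{line:balance} and the at most $O(\gamma^{-1}\log(1/\eps))$ estimates in the loop condition. Each of these only needs accuracy $\eps/3$ with failure probability $O(\eps\gamma)$, so by Hoeffding each costs $\poly(\eps^{-1},\gamma^{-1})$ samples from $\calD$. These draws can be shared with $\calD$ itself, since $\bigwedge_i g_i$ is computable. Second, the checks on Line \ref{line:ret-bad} are run on $\calD_t$, and checking $(i)$ to accuracy $\gamma/2$ and $(ii)$ to accuracy $\eps/2$ costs $\poly(\eps^{-1},\gamma^{-1})$ samples from $\calD_t$. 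Third, each call to $\calA$ needs $m_\calA$ samples from $\calD_t$, $(\calD_t)_+$ and $(\calD_t)_-$.

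The crux is simulating these restricted distributions by rejection sampling from $\calD$, $\dpos$ and $\dneg$: draw a sample and keep it iff $\bigwedge_{i\le t} g_i(\bx)=1$. To bound the cost I use the loop guard. Whenever an iteration executes, the guard has passed, and since the estimate is $\eps/3$-accurate the true mass satisfies $\Pr_{\bx\sim\calD}[f(\bx)=b\land h_t(\bx)=1]\ge 2\eps/3$ for both $b$. Thus a draw from $\calD$ is accepted with probability at least $2\eps/3$. For a draw from $\dpos$ the acceptance probability is $\Pr[h_t=1\mid f=1]\ge 2\eps/3$, and similarly for $\dneg$.

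Consequently each accepted sample needs $O(1/\eps)$ expected raw draws. A Chernoff bound over all iterations then shows that, with probability $0.99$, the total is $O(\eps^{-1}\log(\cdot))$ times the number of requested samples, i.e.\ $\poly(\eps^{-1},\gamma^{-1})\, m_\calA$. Combining this by a union bound with the $0.99$ event for the estimates gives the stated $0.98$.

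\textbf{Form of the hypothesis.} Every exit returns $\bigwedge_{i=0}^t g_i$ or a constant. Here $g_0\equiv 1$ is the entire domain, and every other $g_i$ is an indicator of a region output by $\calA$. The constant $+1$ is the entire domain. The constant $-1$ is the empty set, which is the case of Line \ref{line:balance} with $b=1$.

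The only genuinely nontrivial step is the lower bound on the rejection sampling acceptance rate. It relies on the loop guard holding in every executed iteration, together with the concentration event from the proof of \Cref{lem:cover-learn}.
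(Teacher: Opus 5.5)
Your proposal is correct and follows the paper's proof. The call bound comes from the guard on $t$ in Line~\ref{line:loop}, the sample bound comes from rejection sampling with acceptance probability $\Omega(\eps)$ (on the event that the loop-guard estimates are accurate), and the form of the hypothesis is read off the algorithm; the only cosmetic difference is that you bound the total number of raw draws with a Chernoff bound for sums of geometric variables, whereas the paper applies Markov's inequality to the expectation, and either yields the $0.98$ guarantee after a union bound.
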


\begin{proof}
Note that the form of the hypothesis is immediate by inspection of \cover{}. The bound on the number of calls to $\calA$ follows from the bound on $t$ in Line \ref{line:loop}. To bound the number of samples, note that the checks on Lines \ref{line:balance}, \ref{line:loop} use at most $\poly(\eps^{-1}, \gamma^{-1})$ samples from $\calD$. Assuming our estimate of
	\[\min_{b \in \bits} \Pr_{\bx \sim \calD} \left[f(\bx) = b \land \bigwedge_{i=0}^t \bg_i(\bx) = 1 \right]\]
on in Line \ref{line:loop} is always correct to accuracy $\eps/2$, which happens with probability at least $0.99$, it follows that samples from $\calD_{t}, (\calD_{t})_{-}, (\calD_{t})_{+}$ can be drawn using $O(\eps^{-1})$ samples from $\mathcal{D}$ in expectation. Thus, with probability $0.99$, the total number of samples used in Line \ref{line:compute-region} and in the check on Line \ref{line:ret-bad} over all iterations of the loop is at most $\poly(\eps^{-1}, \gamma^{-1}) \cdot m_\calA$ by Markov's inequality.
\end{proof}

\subsection{Proof of \Cref{thm:main-full}}

With this, we now have everything that we need to prove \Cref{thm:main-full}.

\begin{proof}[Proof of \Cref{thm:main-full}]
	We consider running $\findh$ with accuracy $0.1\eps$. We then let $p := 2^{-O \left(\sqrt{n \log(1/\rho) \log(k\eps^{-1})} \right)}$ denote the success probability in \Cref{lem:weak-learn} and 
	\[\gamma := \exp \left( -O \left(\sqrt{n \log(1/\rho) \log(k\eps^{-1})} \right) \right) \cdot \poly(\eps,k^{-1})\]
	 be the quantity in the lower bound from $(i)$ in \Cref{lem:weak-learn}.
	
	It now follows that running \findh{} $\frac{kn}{\rho \eps} \cdot p^{-1}$ times, drawing $\frac{kn}{\rho \eps \gamma^2}$ fresh samples for each iteration, and verifying whether each hypothesis output by $\findh{}$ satisfies $(i)$ and $(ii)$ constitutes an $0.1\eps$ region learner, where the regions output are halfspaces. Moreover, on any distribution satisfying the premise of \Cref{lem:weak-learn}, we have that the algorithm fails to output a halfspace satisfying $(i)$ and $(ii)$ in \Cref{def:region} with probability $\exp(-\Omega(kn/(\eps \rho))) \ll \gamma^{-1} \log(\eps^{-1})$.
	
	We now run \cover{} with error parameter $0.1\eps$ to produce a hypothesis $h(x)$. By a union bound on the probability that $\findh{}$ fails to output a good region and \Cref{lem:cover-learn}, it follows that with probability $0.98$ 
		\[\Pr_{\bx \sim \calD} [h(\bx) \not = f(\bx)] \leq 	\begin{cases}  \Pr_{\bx \sim \calD} \left[f(\bx) = -1 \land h(\bx) = +1 \right] + 0.1\eps & \text{ $h(x)$ is returned on Line \ref{line:ret-bad}} \\ 0.6 \eps & \text{ otherwise} \end{cases}. \]
	and in the first case, where \cover{} returns on Line \ref{line:ret-bad}, we have that $\calD_{T}$ violates the assumptions of \Cref{lem:weak-learn} and satisfies
	\[\min_{b \in \bits} \Pr_{\bx \sim \calD} \left[f(\bx) = b \land h(\bx) = 1 \right] \geq \frac{2}{3} \cdot (0.1 \eps) \]
	where $T$ denotes the final value of $T$ in \cover{}.
	Note that this second condition implies that for any $b \in \bits$ 
		\[\Pr_{\bx \sim \calD_{T}} \left[f(\bx) = b \right] = \Pr_{\bx \sim \calD} \left[f(\bx) = b | h(\bx) = 1 \right] \geq \frac{2}{3} \cdot (0.1 \eps).\]
	Thus, if the premise of \Cref{lem:weak-learn} is violated, we must have that
	 \[\nmarg(\rho,\calD_{T},f) \geq \Pr_{\bx \sim \calD_{T}}[f(\bx) = -1] - 0.05\eps,\]
	Rearranging then yields
	 	\[\Pr_{\bx \sim \calD_T}[f(\bx) = -1] \leq \nmarg(\rho,\calD_T,f) + 0.05 \eps.\]
	 which implies
	 \begin{align*}
	 	\Pr_{\bx \sim \calD}[f(\bx) = -1 \land h(\bx)=1] &\leq \nmarg(\rho,\calD_T,f) \Pr_{\bx \sim \calD} [ h(\bx) = 1] + 0.05 \eps \\
	 	&= \Pr_{\bx \sim \calD} \left[h(\bx) = 1 \land \min_{i \in [k]} w_i \cdot \bx \in [-\rho, 0] \right] + 0.05\eps \\
	 	&\leq \eta(\rho, \calD, f) + 0.05\eps
	 \end{align*}
	 Thus, we have that with probability $0.98$, $h(x)$ has error at most $\nmarg(\rho,\calD,f) + \eps$, as desired. 
	 
	For the runtime, we note that by taking another union bound, by \Cref{lem:cover-runtime}, we have that with probability $0.96$, we draw at most
	 \[\poly(\eps^{-1}, \gamma^{-1}, k, \rho^{-1}) \cdot 2^{O\left(\sqrt{n \log(1/\rho) \log(k\eps^{-1})}\right)}\]
	 samples and output a hypothesis of error at most $\nmarg(\rho,\calD_{t-1},f) + \eps$. We also make at most $\gamma^{-1} \log(\eps^{-1})$ calls to \findh{} each of which takes time at most 
	 	\[\poly(\eps^{-1}, k, \rho^{-1}) \cdot 2^{O \left( \sqrt{n \log(1/\rho) \log(k\eps^{-1})} \right)}.\]
	 Notably, \Cref{lem:cover-learn} also implies that $h$ will indeed be an intersection of halfspaces.
	 
	 Thus we have proven the theorem with success probability $0.96$. By applying confidence boosting \cite{kearns1994introduction}, we can then boost the success probability to $1 - \delta$ at the cost of repeating $O(\log(1/\delta))$ times and complete the proof.
\end{proof}

\section*{Acknowledgements}
Shyamal Patel is supported by NSF grants CCF-2106429, CCF-2107187, CCF-2218677, ONR grant ONR-13533312, and an NSF Graduate Student Fellowship. Santosh Vempala is supported by NSF grant CCF-2504994 and a Simons Investigator award. The authors thank Rocco Servedio and the anonymous FOCS reviewers for helpful feedback.

\begin{flushleft}
\bibliographystyle{alpha}
\bibliography{allrefs}
\end{flushleft}

\appendix
\section{SQ Lower Bounds for Learning Halfspaces}
\label{app:SQ}

We include a simple SQ lower bound of $2^{\Omega(\sqrt{n})}$ for learning  intersections of halfspaces over $\{0,1\}^n$ that is essentially a rephrasing of \cite{tiegel2024improved}. In particular, we will prove

\begin{theorem}
\label{thm:sq}
Any SQ algorithm that learns an intersection of $\sqrt{n}$ halfspaces over $\zo^n$ either makes $2^{\Omega(\sqrt{n})}$ queries or makes a query of tolerance at most $2^{-\Omega(\sqrt{n})}$.
\end{theorem}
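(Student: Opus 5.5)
We prove \Cref{thm:sq} with the standard statistical-dimension framework: exhibit a large family of target functions (intersections of $\sqrt{n}$ halfspaces over $\zo^n$) together with one fixed distribution, such that the functions are pairwise nearly uncorrelated. The standard SQ lower bound of \cite{blum1994weakly} (or its correlational form in \cite{feldman2017statistical}) then says that if the SQ dimension is $d$, any SQ learner either makes $d^{\Omega(1)}$ queries or uses tolerance $d^{-\Omega(1)}$. So it suffices to build $2^{\Omega(\sqrt{n})}$ such functions whose pairwise correlations are at most $2^{-\Omega(\sqrt{n})}$, after centering by their mean.

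\textbf{Construction.} Following Tiegel's approach (and Klivans--Sherstov), split the $n$ variables into $m=\sqrt{n}$ blocks of size $\sqrt{n}$. Take a base function $g$ on a block that computes a parity-like predicate $\mathrm{MOD}$ and is itself an intersection of $O(1)$ (or few) halfspaces with respect to a suitable distribution. One natural choice is a composition $F=\mathrm{AND}_m\circ \mathrm{OR}$-type function. Each $\mathrm{OR}$ of literals is a halfspace, so $F$ is an intersection of $m=\sqrt n$ halfspaces, i.e.\ a read-once DNF/CNF. The family is obtained by choosing, for each block, a subset $S_j$ of the block's variables, giving $\binom{\sqrt n}{\sqrt n/2}^{\sqrt n}=2^{\Omega(n)}$ candidates. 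From these we select $2^{\Omega(\sqrt n)}$ members with pairwise distinct subsets in many blocks, via a Gilbert--Varshamov style code. The distribution is a product over blocks, chosen so that each block's $\mathrm{OR}$ is balanced in the appropriate sense.

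\textbf{Correlation bound.} Under the product distribution the correlation between $F_S$ and $F_T$ factorizes over blocks, as a product of per-block agreement terms after normalization. In blocks where $S_j\ne T_j$, the per-block correlation of the centered gadgets is at most a constant $c<1$. With $\Omega(\sqrt n)$ such blocks, the correlation is $c^{\Omega(\sqrt n)}=2^{-\Omega(\sqrt n)}$.

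\textbf{Main obstacle.} The difficulty is that AND of ORs is not a product of $\pm1$ factors, so the correlation does not literally factor, and a plain read-once CNF is biased and learnable. The fix I would try first is Tiegel's: realize each block gadget as a halfspace computing "weight of $x_{S_j}$ at least $\tau$". Then plant the hard object as a "parallel pancakes" type distribution on the Boolean cube, so that low-degree moments match between the planted and null distributions. Correlation decay then comes from moment matching up to degree $\Omega(\sqrt n)$ rather than from factorization. I would verify pairwise near-orthogonality by bounding, for each pair, the overlap of the relevant low-degree Fourier/Hermite mass. This moment-matching step is where I expect the real work to lie.
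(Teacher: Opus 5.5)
\textbf{There is a genuine gap.} The framework is right: exhibit $2^{\Omega(\sqrt n)}$ pairwise near-orthogonal intersections of $O(\sqrt n)$ halfspaces under one distribution, then invoke the SQ-dimension bound. But you never produce such a family.

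Your concrete candidate fails, and not only for the reason you flag. For an AND of events $A_{S_j}$ on independent blocks, what factorizes is $\Pr[F_S=F_T=1]=\prod_j\Pr[A_{S_j}\wedge A_{T_j}]$. So the covariance of the indicators is $\Pr[F_S=1]\Pr[F_T=1]\bigl(\prod_j r_j-1\bigr)$, where $r_j=\Pr[A_{S_j}\wedge A_{T_j}]/(\Pr[A_{S_j}]\Pr[A_{T_j}])$. The function $F$ must not be nearly constant; otherwise a constant hypothesis learns it and the uncentered correlations are near $1$. That forces the block events to fail with probability only $O(1/m)$ on average, so the factors $r_j$ are $1\pm O(1/m)$, not a constant $c<1$. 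There is no per-block contraction giving $c^{\Omega(\sqrt n)}$, and nothing in the construction makes $\prod_j r_j-1$ exponentially small.

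The fallback does not fill the gap either. It proposes threshold gadgets plus a ``parallel pancakes'' distribution on the cube with moment matching to degree $\Omega(\sqrt n)$. But it names no distribution and no family. It gives no argument that the members are intersections of $\sqrt n$ halfspaces, and no correlation bound. Since you say this is where the real work lies, the proof is missing its core.

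\textbf{The missing idea needs no blocks, codes, or moment matching: use exact parities and a quadratic lift.}
\begin{enumerate}
\item Take $n=m^2+m$, draw $x$ uniformly from $\{0,1\}^m$, and present $z=(x^{\otimes 2},x)\in\{0,1\}^n$.
\item For $S\subseteq[m]$, the odd-parity function on $S$ equals $\bigwedge_{i=0}^{\lfloor |S|/2\rfloor}\sgn\bigl((2i-\sum_{j\in S}x_j)^2-0.1\bigr)$. The $i$th factor excludes Hamming weight $2i$ on $S$.
\item Each factor is a degree-$2$ PTF in $x$, hence a halfspace in $z$. So every parity on $[m]$ is an intersection of at most $m/2+1\le\sqrt n$ halfspaces over $\{0,1\}^n$.
\item Distinct parities are exactly orthogonal under this distribution. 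So the SQ dimension is at least $2^m=2^{\Omega(\sqrt n)}$, which gives the theorem.
\end{enumerate}

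You hinted that the gadget should be ``parity-like'' and a small intersection ``with respect to a suitable distribution,'' which points this way. But some lift is essential. Over the full cube $\{0,1\}^\ell$ in the original coordinates, parity needs $2^{\ell-1}$ halfspaces. The midpoint of any two even-weight points is also the midpoint of two odd-weight points, so no single halfspace can cut off two even points while keeping every odd point on its positive side.
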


To prove this, we will prove a lower bound on the SQ dimension of the class. Recall that this is defined as follows. 

\begin{definition}[SQ Dimension]
	The SQ dimension of a class $\mathcal{C}$ under a distribution $\mathcal{D}$ is defined as the largest number $d$ such that there exists functions $f_1, \dots, f_d$ such that
		\[\left| \E_{\bx \sim \calD} [ f_i(x) f_j(x) ] \right| \leq \frac{1}{d} \]
	for all $i \not = j$.
	The SQ dimension of the class $\mathcal{C}$ is then the maximum over all distributions $\mathcal{D}$ of the SQ dimension with respect $\mathcal{D}$.
\end{definition}

Notably, this quantity controls the the number of queries needed to learn $\mathcal{C}$ in the statistical query model. 

\begin{theorem}[\cite{blum1994weakly}]
\label{thm:sq-dim-to-queries}
	Suppose that $\mathcal{C}$ is a concept class with SQ-dimension $d$. Then any SQ algorithm must either make $\Omega(d^{1/3})$ queries or a query of tolerance at most $O(d^{-1/3})$.
\end{theorem}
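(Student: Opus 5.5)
The plan is an adversary argument over a uniformly random target among the nearly orthogonal family. The one analytic tool is a Bessel-type inequality: a bounded function can correlate noticeably with only a few members of the family.

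Fix a distribution $\calD$ and functions $f_1,\dots,f_d \in \calC$ with $|\E_{\bx\sim\calD}[f_i(\bx)f_j(\bx)]|\le 1/d$ for all $i\neq j$. Write $\langle g,h\rangle := \E_{\bx\sim\calD}[g(\bx)h(\bx)]$.

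\emph{Step 1 (few large correlations).} Let $g$ satisfy $|g|\le 1$ and $\tau>0$, and let $S$ be the set of indices with $|\langle g,f_i\rangle|\ge \tau$. Choose signs $s_i$ with $s_i\langle g,f_i\rangle\ge\tau$. Then
\[ |S|\tau \le \Big\langle g,\sum_{i\in S}s_if_i\Big\rangle \le \Big\|\sum_{i\in S}s_if_i\Big\|_2 \le \sqrt{|S|+|S|^2/d}. \]
Squaring gives $|S|\tau^2\le 1+|S|/d$. So whenever $\tau^2\ge 2/d$ we get $|S|\le 2/\tau^2$.

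\emph{Step 2 (reducing queries to correlations).} A statistical query $\phi:\calX\times\bits\to[-1,1]$ decomposes, for any target $f_i$, as
\[ \E[\phi(\bx,f_i(\bx))] = \E\Big[\tfrac{\phi(\bx,1)+\phi(\bx,-1)}{2}\Big] + \Big\langle \tfrac{\phi(\cdot,1)-\phi(\cdot,-1)}{2}, f_i\Big\rangle. \]
The first term does not depend on $i$, and the second is a correlation with a function bounded by $1$. The adversary answers every query with the first term alone. Take tolerance $\tau = d^{-1/3}$, so that $\tau^2 = d^{-2/3} \ge 2/d$ once $d$ is at least a constant. By Step 1, this answer is a valid response for every target $f_i$ outside a set of at most $2d^{2/3}$ indices.

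\emph{Step 3 (counting).} Suppose the algorithm makes at most $q \le d^{1/3}/8$ queries, each of tolerance at least $d^{-1/3}$. Then at most $2qd^{2/3}\le d/4$ indices are ever inconsistent with the adversary's answers. The remaining at least $3d/4$ targets all produce the same transcript, and hence the same output $h$ (or the same output distribution, for a randomized learner). Applying Step 1 to $h$ with $\tau=d^{-1/3}$ shows that $h$ has correlation at least $d^{-1/3}$ with at most $2d^{2/3}$ of them. Achieving even weak advantage $d^{-1/3}/2$ requires exactly this much correlation, so the algorithm fails on most consistent targets.

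For a randomized learner, I would fix the adversary's answers as above. Then average over the learner's coins and over a uniformly random target: the expected number of $f_i$ that are well approximated is still $O(d^{2/3})$. Hence success probability $3/4$ is impossible, which gives the $\Omega(d^{1/3})$ queries versus $O(d^{-1/3})$ tolerance tradeoff.

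The main obstacle I expect is Step 3's bookkeeping. It must track that a single target-independent answer sequence is simultaneously legal for most targets, and it must handle randomized algorithms and adaptively chosen queries cleanly. Fixing the adversary's answers in advance, as above, is what makes the transcript identical across the surviving targets and resolves this.
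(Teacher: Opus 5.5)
Your proof is correct. The paper gives no argument of its own and simply imports the result from Blum et al., and your adversary proof is the standard argument behind that citation: a near-Bessel bound (a bounded function correlates at level $\tau\ge\sqrt{2/d}$ with at most $2/\tau^2$ family members), an even/odd split of each query so that one target-independent answer is legal for all but $2d^{2/3}$ targets, and a final count. Step~1 needs pairwise correlations only at most $\tau^2/2$, so it applies verbatim to the paper's $1/d$ definition.

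When you write it up, make two things explicit.
\begin{itemize}
\item State that the learning goal is advantage $d^{-1/3}/2$. This choice is forced, not cosmetic: for the $d$ dictators $x_1,\dots,x_d$ under the uniform distribution, $\mathrm{sign}(\sum_i x_i)$ already has advantage $\Theta(d^{-1/2})$ on every target with zero queries.
\item In the randomized case, use for each target the oracle that returns the target-independent value whenever it is within tolerance. Then success on target $i$ with coins $r$ forces $i\in B_r\cup G_r$, where $|B_r|\le d/4$ counts the inconsistent targets and $|G_r|\le 2d^{2/3}$ the well-correlated ones. Averaged over a uniform target, the success probability is at most $1/4+2d^{-1/3}<3/4$.
\end{itemize}
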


Given this, it will suffice to prove the following lemma.

\begin{lemma}
\label{lem:sq-dim}
Suppose that $n$ is an integer of the form $m^2 + m$, then for any $k \leq \frac{1}{2} m$ the SQ dimension of an intersection of halfspaces is at least $\binom{m}{\leq 2k}$.	
\end{lemma}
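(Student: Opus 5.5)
We will build an explicit family of $\binom{m}{\le 2k}$ functions that are pairwise uncorrelated, following the approach of \cite{tiegel2024improved}. Write $n = m^2 + m$ and split the coordinates of $x \in \zo^n$ into an $m$-dimensional block $y \in \zo^m$ and $m$ blocks $z^{(1)},\dots,z^{(m)} \in \zo^m$. The natural targets are parity functions $\chi_S(y) = \prod_{i\in S}(-1)^{y_i}$ with $|S| \le 2k$. Under the uniform distribution on $y$ these are exactly orthogonal. So if each $\chi_S$, after suitably embedding $y$ into $\zo^n$, can be written as an intersection of at most $k$ halfspaces on the support of a fixed distribution $\calD$, then the SQ dimension is at least $\binom{m}{\le 2k}$. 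The cross-correlations are $0 \le 1/d$, with the caveat that $\chi_\emptyset$ is constant and is still orthogonal to the rest.

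The key structural step is to express $\chi_S$ as an intersection of at most $k$ halfspaces once extra coordinates are present. The parity of $|S\cap \mathrm{supp}(y)|$ is a function of the Hamming weight $s = \sum_{i\in S} y_i \in \{0,\dots,|S|\}$. The set $\{s \text{ even}\}$ is a union of about $|S|/2 \le k$ points. Its complement $\{s \text{ odd}\}$ is obtained by excluding at most $k$ points, and a single integer value $s=j$ can be excluded by one halfspace only if we lift it. To do this we use the auxiliary blocks and define $\calD$ so that $z^{(j)}$ is a deterministic encoding (for instance unary) of quantities depending on $y$. Each constraint ``$s \neq 2j$'' then becomes a single linear threshold. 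Concretely, in the style of Tiegel's construction, we would take $\calD$ to be uniform over $y$ with $z^{(j)}_i = y_i$ copies, or with indicator patterns chosen so that $(s-2j)^2 \ge 1$ becomes linear in $(y,z)$. Since on $\zo$ we have $y_iy_{i'}$ products, one block of $m$ coordinates can store the needed products $y_i \wedge y_{i'}$ restricted appropriately. The $m^2$ extra coordinates are enough to hold all pairwise products $y_iy_{i'}$, so every quadratic polynomial in $y$ is linear in $x$. Then $\chi_S(y) = 1$ (or $-1$) if and only if $\prod_{j}(s - (2j+1))$ has a fixed sign. We would realize this as $\bigwedge_{j=0}^{k-1} [(s-2j)(s-2j-2) \ge 0]$ roughly, which encodes ``$s$ not odd in the window'' through $k$ quadratic, hence linear, constraints.

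Orthogonality then follows directly. $\calD$ is the pushforward of the uniform distribution on $y$, so $\E_{\calD}[f_S f_T] = \E_y[\chi_S\chi_T] = 0$ for $S \ne T$, which gives SQ dimension at least $\binom{m}{\le 2k}$. The main obstacle is the exact bookkeeping in the previous step. We must choose the quadratic constraints, namely $(s-a)(s-a-1)\ge 0$, which holds for every integer $s$ except it excludes nothing, so we instead need constraints of the form $(s-2j-1)^2 \ge 1$. These must cut out exactly the odd values in $[0,|S|]$ using at most $k$ constraints when $|S| \le 2k$. This requires handling both the sign convention of ``intersection $=+1$'' and the boundary values, possibly replacing $\chi_S$ by $-\chi_S$, which does not affect the correlations.
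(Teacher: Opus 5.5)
Your proposal is correct and follows essentially the paper's argument: lift $y\in\{0,1\}^m$ to $(y,y^{\otimes 2})\in\{0,1\}^{m^2+m}$ so that quadratic thresholds in $y$ become halfspaces, and write each parity on $|S|\le 2k$ variables as an intersection of $k$ such thresholds, each excluding one value of $s=\sum_{i\in S}y_i$; exact orthogonality of parities under the pushforward of the uniform distribution then gives SQ dimension at least $\binom{m}{\le 2k}$. The ``bookkeeping'' you flag as the obstacle is already settled by your own choice $\bigwedge_{j=0}^{k-1}[(s-2j-1)^2\ge 1]$, which removes exactly the odd weights $1,3,\dots,2k-1$ and so yields $\chi_S$ itself with $k$ halfspaces (slightly tighter than removing even weights as in the paper's displayed formula, which needs $k+1$ terms when $|S|=2k$), so you can drop the hedged alternatives: copies or unary encodings of $y$, the claim that $\{s\text{ odd}\}$ needs at most $k$ exclusions, and the sign condition on $\prod_j(s-(2j+1))$ are all unnecessary or incorrect.
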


\begin{proof}
We prove the statement by showing how to write parities of size at most $2k$ as intersections of $k$ halfspaces.

As a starting observation, we note that we can design a degree-$2$ polynomial threshold function (PTF) that is equal to $0$ on the $\ell$th hamming level of the cube and $1$ everywhere else. Indeed, we can consider
	\[\sgn \left( \left( \ell - \sum_{i=1}^n x_i\right)^2 - 0.1\right)\]
We can then write any parity on a set of $k$ variables $S \subseteq [n]$ as an intersection of $\lfloor k/2 \rfloor$ degree-$2$ PTFs. In particular, we have that
\begin{equation}
\label{eq:parity-to-intersection}	
\chi_{S}(x) = \bigwedge_{i=0}^{\lfloor k/2 \rfloor} \sgn \left( \left( 2 \cdot i - \sum_{i \in S} x_i\right)^2 - 0.1\right).
\end{equation}

Given this, we will now define our distribution $\mathcal{D}$ and family of orthogonal functions. To start, we sample from $\mathcal{D}$ by drawing $\bx \sim \{0,1\}^{m}$ uniformly and then output $\bz = \bx^{\otimes 2} \circ \bx \in \{0,1\}^n$, where $\circ$ denotes concatenation. We then choose our orthogonal family of functions $\calF$ to correspond to $\chi_S(x)$ for some $S \subseteq [m]$ of size at most $2k$ as in \Cref{eq:parity-to-intersection}. (Note since we added a coordinate for each degree-$2$ monomial, these parities are each an intersection of halfspaces over our new features space $\{0,1\}^n$.) We then have that for any two functions $f,g \in \calF$ corresponding to parities over subsets $S \not = T$,
	\[\E_{\bz \sim \calD} [f(\bz) g(\bz)] = \E_{\bx \sim \{0,1\}^m} [ \chi_S(\bx) \chi_T(\bx)] = 0.\]
This completes the proof.
\end{proof}

With this, we can now prove \Cref{thm:sq}.

\begin{proof}[Proof of \Cref{thm:sq}]
	Let $m$ denote the largest integer such that $m^2 + m \leq n$ and set $k = \frac{1}{m}$. By \Cref{lem:sq-dim}, an intersection of $k$ halfspaces over $\{0,1\}^{m^2+m}$ has SQ dimension at least $\binom{m}{\leq 2k} = 2^{\Omega(\sqrt{n})}$. By a simple padding argument, this then implies that the same result holds for the SQ dimension of an intersection of $k$ halfspaces over $\{0,1\}^n$. Applying \Cref{thm:sq-dim-to-queries} then yields the result.
\end{proof}

\end{document}